\documentclass[a4paper]{amsart}
\usepackage[margin=2cm]{geometry}
\usepackage[super,authoryear,longnamesfirst]{natbib}

\usepackage{caption}
\usepackage{subcaption}
\usepackage{graphicx}
\usepackage{float}
\usepackage{enumerate}
\usepackage[cyr]{aeguill}
\usepackage{array}             
\usepackage{amsmath,amsthm,amssymb,amsfonts} 
\usepackage{url}
\usepackage[all]{xy}
\usepackage{todonotes}
\usepackage{bm}
\usepackage{nicefrac}
\usepackage[inline]{asymptote}
\usepackage{makecell}
\usepackage{multirow}
\usepackage{pbox}
\usepackage{epsfig}
\usepackage{color}
\usepackage[table]{xcolor}
\usepackage{indentfirst}
\usepackage{autobreak}
\usepackage{dsfont}
\usepackage{tabularht}
\usepackage{threeparttable}
\usepackage{appendix}
\usepackage{longtable}
\usepackage{tikz-cd}
\usepackage{hyperref}
\usepackage[capitalise,noabbrev]{cleveref}

\usepackage{placeins}
\usepackage[english]{babel}

\DeclareMathAlphabet\mathbfcal{OMS}{cmsy}{b}{n}

\usepackage{booktabs}
\usepackage{multicol}    
\usepackage{supertabular}  
\usepackage{lipsum}
\usepackage{ltablex}      
\usepackage{float}
\newcommand{\RR}{\mathbb{R}} 
   
\newcommand{\Ela}{\mathbb{E}\mathrm{la}}

\newcommand{\SymC}{\mathfrak{I}}

\newcommand{\tr}{\mathrm{tr}}

\newcommand{\Fix}{\mathbb{F}\mathrm{ix}}
\newcommand{\SnRd}{\mathbb{S}^{n}(\RR^{d})}

\newcommand{\StRtr}{\mathbb{S}^{2}(\RR^{3})}

\newcommand{\VV}{\mathbb{V}} 
\newcommand{\K}{\qT{K}}
\newcommand{\J}{\qT{J}}
\newcommand{\id}{\dT{1}}
\newcommand{\I}{\mathop\mathrm{I}\limits_\approx}

\newcommand{\Ca}{\mathrm{C_{\uppercase\expandafter{\romannumeral1}}}}
\newcommand{\Cb}{\mathrm{C_{\uppercase\expandafter{\romannumeral2}}}}
\newcommand{\Cd}{\mathrm{C_{\uppercase\expandafter{\romannumeral3}}}}

\newcommand{\SO}{\mathrm{SO}}
\newcommand{\OO}{\mathrm{O}}
\newcommand{\ZZ}{\mathrm{Z}}   

\newcommand{\DD}{\mathrm{D}}

\newcommand{\oct}{\mathcal{O}}

\newcommand{\strata}[1]{\Sigma_{[#1]}}	    

\newcommand{\vg}{\mathbf{g}}

\newcommand{\G}{\mathrm{G}}

\newcommand{\Hh}{\mathrm{H}}

\DeclareMathOperator{\Orb}{Orb}      

\newcommand{\T}{\mathrm{T}}

\newcommand{\ben}{\begin{equation*}}
	\newcommand{\een}{\end{equation*}}
\newcommand{\ba}{\begin{eqnarray}}
	\newcommand{\ea}{\end{eqnarray}}
\newcommand{\ban}{\begin{eqnarray*}}
	\newcommand{\ean}{\end{eqnarray*}}

\newcommand{\red}[1]{\textcolor{red}{#1}}
\newcommand{\black}[1]{\textcolor{black}{#1}}
\newcommand{\gray}[1]{\textcolor{lightgray}{#1}}

\newcommand{\qT}[1]{\underset{\approx}{\mathrm{#1}}}
\newcommand{\tT}[1]{\underset{\simeq}{\mathrm{#1}}}
\newcommand{\dT}[1]{\underset{\sim}{\mathrm{#1}}}
\newcommand{\V}[1]{\underline{{\mathrm{#1}}}}

\newcommand{\onedot}{$\mathsurround0pt\ldotp$}
\newcommand{\dc}{
	\mathbin{\vcenter{\baselineskip.57ex
			\hbox{\onedot}\hbox{\onedot}}%
}}%

\newcommand{\qc}{
	\mathbin{\vcenter{\baselineskip.57ex
			\hbox{\onedot}\hbox{\onedot}} \vcenter{\baselineskip.57ex
			\hbox{\onedot}\hbox{\onedot}}%
}}%

\newtheorem{thm}{Theorem}[section]

\newtheorem{lem}[thm]{Lemma}

\newtheorem{prop}[thm]{Proposition}
\newtheorem{defn}[thm]{Definition}

\newtheorem{exe}[subsubsection]{Example}
\numberwithin{equation}{section}   
\newtheorem{theorem}{Theorem}[section]

\newcommand{\qact}[1]{\overline{#1}}

\newcommand{\GL}{\mathrm{GL}}
\newcommand{\vinv}{\mathbf{-1}}
\newcommand{\vr}{\mathbf{r}}

\newcommand{\HH}{\mathbb{H}} 
\newcommand{\Vv}{\mathbb{V}}

\usepackage{listings}
\usepackage{xcolor}           
\definecolor{codegreen}{rgb}{0,0.6,0}
\definecolor{codegray}{rgb}{0.75,0.75,0.75}
\definecolor{codepurple}{rgb}{0.58,0,0.82}
\definecolor{backcolour}{rgb}{0.95,0.95,0.92}
\definecolor{bubbles}{rgb}{0.91,1.0,1.0}
\definecolor{oxfordblue}{rgb}{0.0, 0.13, 0.28}
\definecolor{navyblue}{rgb}{0.0, 0.0, 0.5}
\definecolor{mintcream}{rgb}{0.96, 1.0, 0.98}
\definecolor{mediumpersianblue}{rgb}{0.0, 0.4, 0.65}
\definecolor{palecerulean}{rgb}{0.61, 0.77, 0.89}
\definecolor{midnightblue}{rgb}{0.1, 0.1, 0.44}
\definecolor{lavender(web)}{rgb}{0.9, 0.9, 0.98}
\definecolor{ForestGreen}{rgb}{0.13, 0.55, 0.13}
\definecolor{darkgreen}{rgb}{0.0, 0.5, 0.0}
\definecolor{burgundy}{rgb}{0.5, 0.0, 0.13}
\definecolor{amber}{rgb}{1.0, 0.49, 0.0}
\definecolor{brightpink}{rgb}{1.0, 0.0, 0.5}

\colorlet{red}{black}
\colorlet{blue}{black}

\begin{document}
\let\WriteBookmarks\relax
\def\floatpagepagefraction{1}
\def\textpagefraction{.001}


\author{Nicolas Auffray}
\thanks{This document exposes  results from the research
   project ANR$-19-$CE$08-0005$ (project Max-Oasis) funded by the French National Research
   Agency (ANR).}
\email{nicolas.auffray@sorbonne-universite.fr}



\title{On Exotic Materials in 3D Linear Elasticity with High Symmetry Classes}

\address{
Institut d’Alembert, CNRS UMR 7190, Sorbonne Université, Paris,75005,France}


\author[1,2]{Guangjin Mou}
\email{mouguangjin@ust.hk}
\address{Department of Civil and Environmental engineering, The Hong Kong University of Science and Technology, Hong Kong, China}


\author{Boris Desmorat}

\begin{abstract}
An anisotropic elastic material is referred to as exotic when, under specific loadings, its
mechanical response exhibits a higher degree of symmetry than that prescribed by its intrinsic
material symmetry. Such materials, which may be regarded as lying—conceptually and
functionally—between two distinct symmetry classes, are of significant practical relevance.
They enable the tailored design of metamaterials capable of reconciling otherwise incompatible
mechanical requirements; for example, achieving directional isotropy of the Young’s modulus in
an intrinsically anisotropic medium. This work focuses on the systematic classification of
exotic structures within the framework of three-dimensional linear elasticity. An exhaustive
classification is carried out, leading to the enumeration of 18 exotic structures corresponding
to symmetry classes higher than orthotropy. Representative examples of exotic elastic
behaviours are analysed in detail.
\end{abstract}



\maketitle

\section{Introduction and motivation}
\label{S_intro}

In the recent years, remarkable advancements in Additive Manufacturing (A.M.) ~\citep{wong2012}
and topology optimization \citep{bendsoe2003,Amstutz2010} have catalysed an accelerated
interest  in architectured materials. These types of materials possess special elastic
properties not found in natural materials ~\citep{schaedler2016,craster2023} such as negative
Poisson's ratio~\citep{lakes1993}, negative bulk modulus ~\citep{ding2007},
pentamode~\citep{buckmann2012}...

However, the previously listed exotic properties concern mainly isotropic elasticity. Indeed,
what makes architectured materials truly intriguing is that their unusual mechanical properties
stem not from the properties of their individual components, but rather from the intricate
internal geometry of their unit cell. Since the internal geometry is added as a new variable,
it is natural to think about the resulting anisotropy properties. Some non-standard anisotropic
linear elasticities have also been observed by researchers. In 2D, the early paper of
~\citep{Van02} has shown the existence of a particular type of planar orthotropic
material possessing an isotropic deviatoric elasticity. That is, an anisotropic orthotropic
elasticity tensor whose restriction to its purely deviatoric part is isotropic—a property that
is, generically, true only for isotropic elastic materials. The resulting tensor thus appears
to exhibit behaviour halfway between these two anisotropy classes. For historical
\textcolor{blue}{reasons}, this material has been named \emph{$R_{0}$-orthotropy}. \red{As an
example, paper exhibits an exotic elastic behavior, being $R_0$-orthotropic in compliance
\citep{HO51,OS00}.} The corresponding design has been addressed in \citep{Mou2023} and
experimental validation conducted in \cite{MDC+26}. \red{As an example, \autoref{fig:R0_example} shows
the theoretical geometry of an architectured material exhibiting $R_0$-orthotropic behavior in stiffness,
alongside the actual material obtained by 3D printing and experimentally validated in \cite{MDC+26}.}

\begin{figure}[H]
\centering
\begin{subfigure}[b]{0.32\textwidth}
\centering
\includegraphics[width=\textwidth]{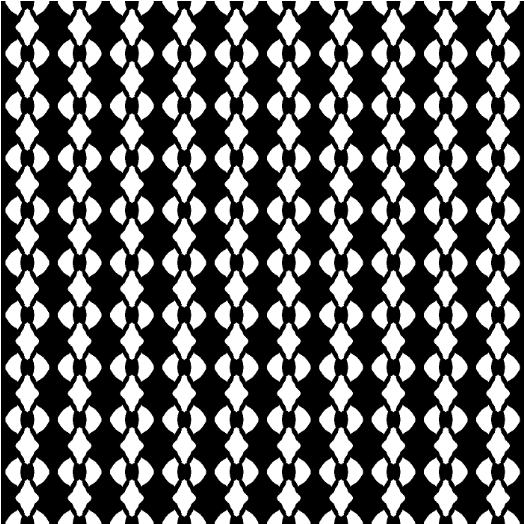}
\caption{Optimal architecture}
\label{fig:R0_theo}
\end{subfigure}
\hspace{0.04\textwidth}
\begin{subfigure}[b]{0.32\textwidth}
\centering
\includegraphics[width=\textwidth]{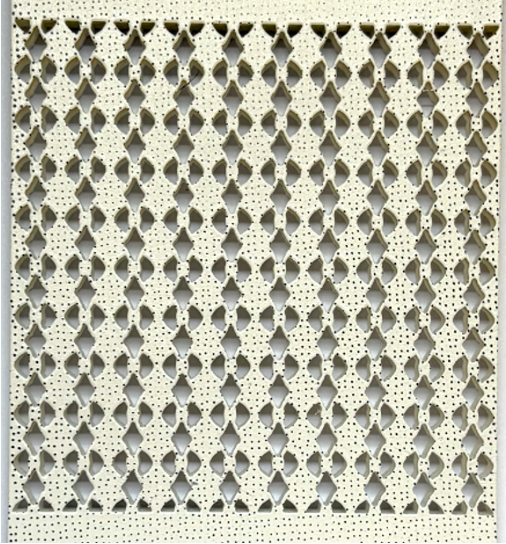}
\caption{Material obtained by 3D printing}
\label{fig:R0_fab}
\end{subfigure}
\caption{Example of a 2D $R_{0}$-orthotropic elastic material.}
\label{fig:R0_example}
\end{figure}

For the 3D case, \textcolor{blue}{Rychlewski}~\citep{Rychlewski2001} and He~\citep{He2004} have
shown, in theoretical articles, that anisotropic materials (orthotropic or transverse
isotropic) can have their directional Young's modulus, shear modulus or area modulus isotropic.
Nonetheless, these constitute nearly the only references available on the topic, and their
scope is confined to a few particular cases. They fall short of proposing a genuine theoretical
framework capable of systematically investigating this avenue, which we believe to be
particularly promising both for its theoretical richness and its potential applications.

Recent advances in A.M. technologies, coupled with developments in multi-scale topology
optimization, now offer a timely opportunity to revisit this topic and pursue further
investigations centered around the following key questions:
\begin{enumerate}
\item what is the correct mathematical definition for exotic \red{elastic} materials ?
\item can we, \emph{a priori}, list the number of exotic structures\footnote{It is equivalent to "exotic sets" defined in \citep{Mou2023}.} ?
\item what are the meso-structure producing these effects ? 
\end{enumerate}

The first question has been raised and answered in a former contribution dedicated to the
$\RR^2$ case in \citep{Mou2023}. This definition is independent of the dimension of the space, and we shall retain it
unchanged in the present contribution\footnote{It should be noted that some specifically
designed materials may have interesting non-standard properties while not meeting the
\emph{hypersymmetry} requirement. These materials will be referred to as \emph{semi-exotic}.
They will not be considered in the present publication }.
\red{
From a mathematical and technical standpoint, an exotic elastic behaviour is defined from the elasticity tensor as follows
\begin{defn}[$F$-exotic elasticity tensor]
\label{exotic_materials_math}
Let $G \leq \OO(3)$ be a closed subgroup conjugate to a representative of a
class in $\mathcal{I}(\Ela)$. We denote
\ben
\Fix(\Ela,G):=\{\qT{C}\in\Ela \mid g\star\qT{C}=\qT{C},\ \forall g\in G\}.
\een
Let $F:\Ela\longrightarrow \VV$ be an $\OO(3)$-equivariant mapping. For a
closed subgroup $H$ with $G \leq H$, we denote
\ben
\Fix(\Fix(\Ela,G),H):=\{\qT{C}\in\Fix(\Ela,G)\mid g\star F(\qT{C})=F(\qT{C}),\ \forall g\in H\}.
\een
A tensor $\qT{C}$ is said to be \emph{$G$-exotic for $F$} if there exists a
closed subgroup $H$ with $G < H$ such that
\ben
\qT{C}\in \strata{G}\cap\Fix(\Fix(\Ela,G),H)
\quad\text{and}\quad
\dim \Fix(\Fix(\Ela,G),H)<\dim\Fix(\Ela,G).
\een
\end{defn}
\noindent In this defintion $\Ela$, $\mathcal{I}(\Ela)$ and $\strata{G}$ respectively denote: the vector space of elasticity tensors, its set of symmetry classes,
 and the open stratum\\footnote{The \emph{open stratum} $\strata{G}$ denotes the set of elasticity tensors whose symmetry group is \emph{exactly} conjugate to $G$,
 as opposed to the fixed-point set $\Fix(\Ela,G)$, which also contains tensors of strictly greater symmetry.} of elasticity tensors whose symmetry group is conjugate to $G$. 
 The notation $\star$ stands for the classical tensorial action. 
These notions will be properly defined in the course of the paper. From a mechanical standpoint, this definition can be summarised more simply as follows
\begin{defn}[Exotic materials]
\label{exotic_materials3D}
An anisotropic elastic material will be said to be \emph{exotic}, provided
	\begin{enumerate}
		\item \textbf{(Specific design)} it satisfies constraints independent of those that may be imposed by symmetry arguments;
		\item \textbf{(Hypersymmetric}) behave like a material with a larger symmetry group for a restricted class of loadings.
	\end{enumerate}
\end{defn}
An orthotropic elastic material with an isotropic directional Young's modulus is indeed hyper-symmetric; in fact,
having an isotropic Young's modulus is a hallmark of isotropy.}


 \red{The resolution of the second question relies on the decomposition of the elasticity tensor into a sum of elementary tensors that 
constitute the fundamental building blocks of anisotropic behaviour. This decomposition, which lies at the very core of our approach,
is known in the literature as the harmonic decomposition \citep{Backus1970,Forte1996}. Prior to the harmonic decomposition itself is
the notion of \emph{harmonic structure}, which consists of the number and nature of the tensor spaces involved in this decomposition.
This notion is fundamental, since the classification of exotic elastic materials is based on the determination of the set of all possible exotic structures.
As a consequence of a harmonic structure involving multiple identical spaces, explicit harmonic decompositions are not uniquely determined. It follows that
the notion of exotic material must be understood relative to a particular choice of harmonic decomposition, which endows these otherwise
abstract exotic structures with a precise mechanical interpretation. We shall thus need to distinguish, in the following, between the 
notion of an \emph{exotic structure}, which identifies and labels types of degeneracy, and that of an \emph{exotic material}, which 
constitutes a concrete realisation of this structure through an explicit harmonic decomposition. Finally, exotic materials are elastic 
materials that can exhibit an \emph{exotic behaviour} under specific loadings, as stated in \autoref{exotic_materials_math}.
To sum up, a single exotic structure may lead to many different exotic materials, depending on the chosen explicit decomposition. 
In this article, among all the possibilities, we consider two distinct explicit harmonic decompositions and, on this basis, present several examples of exotic materials.}
\noindent Due to the high combinatorial complexity arising in the description of exotic structures for
low-symmetry classes, this first contribution will focus exclusively on symmetry classes of
higher symmetry than Orthotropy, namely: Trigonal $[\DD_3]$, Tetragonal $[\DD_4]$, Transversely
Isotropic (TI) $[\OO(2)]$, Cubic $[\mathcal{O}]$, and Isotropic $[\SO(3)]$. While admittedly
restrictive, this assumption still encompasses a wide range of situations of practical
relevance in engineering \citep{lemaitre1994}. The remaining symmetry classes, namely:
Orthotropic $[\DD_{2}]$, Monoclinic $[\ZZ_{2}]$, and Triclinic $[1]$ will be addressed in a
forthcoming contribution.

The last question concerns the determination of mesostructure producing the overall exotic
elasticity. It is still open and very important for practical applications. While this question
has been raised and resolved in $\RR^2$, its extension to $\RR^3$ presents inherently greater
challenges and requires a dedicated treatment, which lies beyond the scope of the present
paper. These points will be addressed in a companion paper. However, the theoretical framework
developed here provides an algebraic formulation of conditions that various exotic elastic
materials satisfy. These results are therefore essential for formulating cost functions
associated with multi-scale topology optimization problems.

However, we did not address the fundamental question of whether the proposed exotic elasticity
tensors are actually realizable. From a theoretical standpoint, this issue was investigated in
\citep{Camar-Eddine2003} in which the authors shown that any elasticity tensor can be obtained
by homogenisation. Nevertheless, this result must be nuanced in light of recent findings by
Djourachkovitch et al. who evidenced that the set of effective elasticity tensors reachable
from a two-phase material is a strict subset of $\Ela$. Such subset was qualified as
\textit{feasible elasticity tensors} \citep{djourachkovitch2023data}. At first sight, these
results may seem to be in contradiction.  However, while Djourachkovitch's findings pertain to
a specific class of meso-structures, namely, two-phase materials, Camar-Eddine's theorem is
more general, as it imposes no restrictions on the nature of the meso-structure. Consequently,
although any exotic elasticity tensor can theoretically be realized, what remains unknown in
each particular case is the nature of the required microstructure.

To summarize, the main object of the present contribution is to determine anisotropic exotic
structures in 3D linear elasticity. Yet, the analytical tools developed for this purpose bring
side results: they provide a unified perspective on the design of architectured materials
exhibiting anisotropic and exotic mechanical behaviour. Moreover, various isolated cases
reported in the literature can be recovered as particular instances of the general
constructions derived from our theoretical approach, thereby highlighting the promising scope
of future developments.

The paper is organised as follows. It starts in \autoref{S_space3D} with an overview of the
space on elasticity tensors and the linear transformations of its elements. It provides a
comprehensive introduction to the  geometrical tools required for conducting our analysis. To
ensure narrative flow, some results previously established in the literature are presented
directly, with readers encouraged to refer to the cited works for further details.  With these
tools in hand, the anisotropic exotic structures associated with tensors whose symmetry class
is higher than orthotropy are determined in \autoref{S_SymExo}. In \autoref{S_ExoMaterials}
three examples of exotic materials, based on the two explicit decomposition, are presented in
detail. Finally, the proofs of the main results are provided in \autoref{S_Proof}.


\textbf{Notations:}\\ \noindent \textit{Spaces}:
\begin{itemize}
	\item	$\mathbb{R}^{d}$: real vector space of dimension $d$;
	\item $\SnRd$ : the space of totally n-index symmetric tensors on $\mathbb{R}^{d}$;
	\item $\Ela$ : vector space of elasticity tensors on $\mathbb{R}^{3}$; 
	\item $\mathbb{H}^{n}$ : vector space of harmonic tensors of order $n$ in dimension $3$.
\end{itemize}

\noindent \textit{Operations}:
\begin{itemize}
\item	$\mathrm{tr}_{ij}\T$: trace of tensor $\T$. It is obtained by the contraction of two indices $i,j$ of $\T$;
\item $\otimes$: standard tensor product and $\otimes ^{n}$ indicates its repeated application $n$ times;
\item $\overline{\underline{\otimes}}$: tensor product  between two second order tensors 
defined by
$(\dT{a}\overline{\underline{\otimes}}\dT{b})_{ijkl}=\frac{1}{2}\left(a_{ik}b_{jl}+a_{il}b_{jk}\right)$;
\item$\oplus $ : represents the direct sum of vector spaces;
\item$\circledcirc$ : clips operation between symmetry classes.
\end{itemize}

\noindent \textit{Tensors}:
\begin{itemize}
\item Tensors of order 0,1,2,4 are respectively represented by $\alpha ,\underline{\mathrm{v}},\mathop{\mathrm{a}}\limits_\sim$, $\mathop{\mathrm{A}}\limits_\approx$;
\item$ \dT{1}$ : identity tensor of order 2, ${1}_{ij}=\delta_{ij}$ ;
\item$\I$ : identity tensor of order 4 for $\StRtr$, 
$\I= \dT{1} \, \overline{\underline{\otimes}} \, \dT{1}$;
\item$\K$ : spherical projector of $\mathbb{R}^{3}$ , $\qT{K}=\frac{1}{3}\dT{1}\otimes\dT{1}$;
\item$\J$ : deviatoric projector of $\mathbb{R}^{3}$, $\qT{J}=\qT{I}-\frac{1}{3}\dT{1}\otimes\dT{1}$;
\item $\tT{\epsilon}$ : Levi-Civita third-order tensor in $\RR^{3}$ ($\epsilon_{ijk}$ is $1$ if $(i,j,k)$ is an even permutation of $(1,2,3)$, $-1$ if it is an odd permutation, and $0$ if any index is repeated).
\end{itemize}

\section{The space of elasticity tensors}\label{S_space3D}

Let us consider $\StRtr$ the space of second-order symmetric tensors.  In the field of linear
elasticity, the constitutive law is a  linear relationship between the Cauchy stress tensor
$\dT{\sigma}\in \StRtr$ and the infinitesimal strain tensor $\dT{\varepsilon}\in \StRtr$:
	\ben\label{eq:RelEla}
		\dT{\sigma}=\qT{C}\dc\dT{\varepsilon},
	\een
in which $\qT{C}$ is the elasticity tensor, element of the following vector space:
\ben
	\Ela:=\{\qT{C}\in\otimes^4\RR^3|{C}_{\underline{(ij)}\ \underline{(kl)}}\},\quad \dim\Ela=21.
\een
Here the notation $(ij)(kl)$ indicates the minor index symmetries while $\underline{ij}\
\underline{kl}$ denotes the major one.

In order to study the anisotropic properties of elasticity tensors, it is important to
understand how the elements of $\Ela$ transform under the action of a linear isometry.

\subsection{Transformation of an elasticity tensor}\label{sub_DefSO3}

Let us define by $\OO(3)$ the group of linear isometries of $\RR^{3}$:
\ben
\OO(3):=\left\{\vg\in\GL(3) \,\middle|\, \vg^{T}=\vg^{-1}\right\},
\een
in which $\GL(3)$ is the group of invertible transformations of $\RR^{3}$.  Linear isometries
constitute the set of transformations that do not deform the geometry. Depending on the sign of
their determinant, these transformations reduce to rotations when the determinant is positive,
and to reflections, inversions, or their combinations with rotations when it is negative. For
the former use, the special orthogonal group $\SO(3)$ is introduced as the subgroup of $\OO(3)$
consisting of rotations in $\RR^3$:
\[
\SO(3) := \left\{ \vr \in \OO(3) \,\middle|\, \det \vr = 1 \right\}.
\]
The representation\footnote{A representation of a group $\G$ on a vector space $\VV$ is a group
homomorphism from $\G$ to $\GL(\VV)$.} of $\OO(3)$ on $\Ela$ is provided by the standard
tensorial action:
\ben
	\qact{\qT{C}}=\vg\star\qT{C},\ \vg\in\OO(3),
\een
in which $\qact{\qT{C}}$ is the image of $\qT{C}$ by the action of  $\vg\in\OO(3)$. This
action, when expressed in an orthonormal basis, is given in components by:
\begin{equation}
\label{Ctransformation}
    \qact{C}_{ijkl}=g_{ip}g_{jq}g_{kr}g_{ls}C_{pqrs}.
\end{equation}
The symmetry group of $\qT{C}$ is the set of operations $\vg\in\OO(3)$ leaving this tensor
invariant:
    \begin{equation}
    \label{P3C6S1_e1}
    \G_{\qT{C}}:=\{\vg\in \OO(3),\mid \qT{C}=\vg\star\qT{C}\}.
    \end{equation}
The symmetry group of $\qT{C}$ is a closed subgroup of $\OO(3)$. The classification, up to
conjugacy, of $\OO(3)$-closed subgroups is a classical result which can be found, for instance,
in \citep{sternberg1995group,golubitsky1985}.

In the case of a tensor of even order, spatial inversion ($\vg=\vinv$) is always part of its
symmetry group:
\ben
 \qact{C}_{ijkl}=(-1)^4\delta_{ip}\delta_{jq}\delta_{kr}\delta_{ls}C_{pqrs}=C_{ijkl} .
\een
This observation allows us to restrict the analysis of invariance properties to the proper
rotation group $\SO(3)$, rather than considering the full orthogonal group $\OO(3)$. Indeed,
any orthogonal transformation in $\mathbb{R}^3$ can be written in the form
\ben
\vg=\det(\vg)\ \vr(\V{n},\theta),
\een
with $\vr(\V{n}, \theta)\in\SO(3)$ a $\theta$-angle rotation around the unit vector $\V{n} \in
\RR^3$. It results that the symmetry class of $\qT{C}$ can be written as $\G_{\qT{C}} =
\mathrm{H} \otimes \mathrm{Z}^{c}_2$, in which $\mathrm{H} \in \SO(3)$ is a closed subgroup and
$\mathrm{Z}^{c}_2 = \{\mathbf{1}, \vinv \}$. Since only $\mathrm{H}$ is relevant for
classifying the symmetries, it is standard to describe the symmetry properties of even-order
tensors in terms of subgroups of $\SO(3)$\citep{Forte1996,olive2013}. In this context, the
definition in \autoref{P3C6S1_e1} can be reduced to $\SO(3)$. For simplicity sake, we will
continue to use $\vg$ to denote elements of $\SO(3)$ in what follows. Every closed subgroup of
$\SO(3)$ is conjugate to an element of the following collection \citep{armstrong2013}:
\ben
\{\left[1 \right],\left[\ZZ_n\right],\left[\DD_n\right],\left[\mathcal{T}\right],\left[\mathcal{O}\right],\left[\mathcal{I}\right],\left[\SO(2)\right],\left[\OO(2)\right],\left[\SO(3)\right]\}_{n\geq2} .
\een
This collection can be distinguished into two parts:\\ \underline{\textbf{Plane groups}:}
$\{\left[1
\right],\left[\ZZ_n\right],\left[\DD_n\right],\left[\SO(2)\right],\left[\OO(2)\right]\}_{n\geq2}$.\\
They are the closed subgroups of $\OO(2)$, with:
\begin{itemize}
    \item $1$ is the trivial subgroup, containing only the unit element;
    \item $\ZZ_{n}$ is the cyclic subgroup of $\SO(2)$ generated by $\vr(\V{n},\theta=\frac{2\pi}{n})$, with the convention that $\ZZ_{1}=1$;
    \item $\DD_{n}$ is the dihedral group generated by $\vr(\V{n},\theta=\frac{2\pi}{n})$ and $\vr(\V{k},\pi)$ ($\V{n}$ and $\V{k}$ being orthogonal);
    \item $\SO(2)$ is the group of rotations $\vr(\V{n},\theta)$, with $\theta\in[0,2\pi[$;
    \item $\OO(2)$ is the orthogonal group generated by $\vr(\V{n},\theta)$ and $\vr(\V{k},\pi)$ ($\V{n}$ and $\V{k}$ being orthogonal).
\end{itemize}
\underline{\textbf{Exceptional groups}:}
$\{\left[\mathcal{T}\right],\left[\mathcal{O}\right],\left[\mathcal{I}\right],[\SO(3)]\}$.\\
They are the groups leaving invariant the platonic solids and sphere. Specifically,
\begin{itemize}
\item $\mathcal{T}$ the tetrahedral group of order 12 which fixes a tetrahedron;
\item $\mathcal{O}$ the octahedral group of order 24 which fixes an octahedron or a cube; \item $\mathcal{I}$ the icosahedral group of order 60 which fixes the icosahedron or the dodecahedron;
\item $\SO(3)$ is the special orthogonal group that leaves the sphere invariant.
\end{itemize}

\subsection{Harmonic structure of $\Ela$}
An elasticity tensor undergoes a complex transformation that is difficult to grasp directly;
indeed, relation \autoref{Ctransformation} shows that this transformation is described by an
eighth-order tensor. To gain a better understanding of the structure of the transformation
under the action of the group $\SO(3)$, the harmonic decomposition will be introduced. This
decomposition allows the tensor space $\Ela$ to be expressed as a direct sum of harmonic
subspaces $\HH^n$, which consist of totally symmetric and traceless tensors of order $n$, and
have dimension $2n+1$.

In the following, we shall distinguish between the notion of \textit{harmonic structure} and
that of \textit{explicit harmonic decomposition}. The former refers to the decomposition of the
vector space as a direct sum of harmonic subspaces; this structure is unique. The latter, on
the other hand, consists of an explicit formula realising an isomorphism between the elasticity
tensor space and its harmonic decomposition. An explicit harmonic decomposition of $\Ela$ is
not unique and several inequivalent parametrisations are  possible.

Harmonic decompositions of $\Ela$ have been performed by many authors
\citep{Backus1970,onat1984,cowin1989,Baerheim1993,Forte1996}. All of these decompositions
correspond to the same following harmonic structure:
    \begin{equation}
    \label{harmonic_structure}
                \Ela\simeq2\mathbb{H}^{0}\oplus 2\mathbb{H}^{2}\oplus \mathbb{H}^{4},
    \end{equation}
which means that if $f$ denotes an explicit harmonic decomposition, each elasticity tensor
$\qT{C}\in \Ela$ can be written as:
	\begin{equation}
		\label{equation1}
		\qT{C}=f(\alpha,\beta,{\dT{h}}^{a},{\dT{h}}^{b},\qT{H}),
	\end{equation}
with $\alpha,\beta \in \mathbb{H}^{0} $,${\dT{h}}^{a},{\dT{h}}^{b} \in \mathbb{H}^{2} $,
$\qT{H} \in \mathbb{H}^{4}$.

An explicit harmonic decomposition is a $\SO(3)$-equivariant isomorphism meaning that
\ben
\forall \vg\in\SO(3), \quad \vg\star\qT{C}=f(\alpha,\beta,\vg\star{\dT{h}}^{a},\vg\star{\dT{h}}^{b},\vg\star\qT{H}).
\een
Based on this harmonic structure, any elasticity tensor $\qT{C}$ can be split into an
\begin{itemize}
    \item isotropic part defined by two scalars $\alpha,\beta\in \mathbb{H}^{0}$. These scalars do not transform with the tensor and hence are referred to as \textit{invariant},
    \item anisotropic part with ${\dT{h}}^{a},{\dT{h}}^{b}\in \mathbb{H}^{2}$ and $\qT{H}\in \mathbb{H}^{4}$, for which $({\dT{h}}^{a},{\dT{h}}^{b},\qT{H})$ is called a harmonic triplet in $\RR^{3}$. These tensors do transform with the tensor and hence are referred to as \textit{covariant}.
\end{itemize}
The harmonic decomposition of a tensor space carries substantial structural information; in
particular, it governs both the number and the nature of its symmetry classes\footnote{The
symmetry classes are the equivalence classes of symmetry groups with respect to rotations. This
is the correct notion associated to the intuitive one of \textit{type of anisotropy}.}
~\citep{olive2013,Olive2014c}. We will return to this point in more detail in
\autoref{Sub_SymClaEla}.

It is important to note that, since the harmonic structure (see \autoref{harmonic_structure})
contains multiple copies of harmonic spaces of the same order, the associated explicit
decomposition is not uniquely defined. The knowledge of this non-unicity is important since
many different harmonic decompositions are possible, and possess different physical meanings.
This is the same phenomenon which, in the case of isotropy, leads to multiple possible
definitions of these coefficients: $(E,\nu)$, $(\lambda,\mu)$, $(K,G)$, $\ldots$. In $\RR^3$
this freedom also applies to the parametrization of the two anisotropic\footnote{In $\RR^2$,
the harmonic structure is slightly different, and the non-uniqueness concerns only the
isotropic part \citep{Mou2023}, whereas in $\RR^3$, the situation is more complex.} subspaces
of type $\HH^2$. These different physical \textcolor{blue}{contents} will be exploited for
defining different exotic materials. In other words, it gives a mechanical interpretation of
the \emph{\textcolor{blue}{hypersymmetry}} property mentioned in the definition of exotic
materials.


\subsection{Explicit harmonic decompositions}\label{Sub_ExpDec}
Let us denote by $f$ an explicit harmonic decomposition, each elasticity tensor $\qT{C}\in
\Ela$ can be written as:
	\ben
		\qT{C}=f(\alpha,\beta,{\dT{h}}^{a},{\dT{h}}^{b},\qT{H}).
	\een
The precise definition—and thus the mechanical meaning—of $\alpha$, $\beta$, $\dT{h}^{a}$, and
$\dT{h}^{b}$ relies on the choice of an explicit harmonic decomposition. Of all admissible
decompositions, two will be retained in the sequel owing to their distinct mechanical
relevance: the Clebsch-Gordan Harmonic Decomposition (CGHD) and the Schur-Weyl Harmonic
Decomposition (SWHD).

\subsubsection*{Clebsch-Gordan Harmonic Decomposition (CGHD)}

It is based on the decomposition of the state tensor space $\StRtr$ of strain and stress into a
deviatoric space $\HH^{2}$ and a spherical one $\HH^{0}$, that is
$\StRtr\simeq\HH^{2}\oplus\HH^{0}$. Hence an elasticity tensor $\qT{C}$ is viewed as an element
of $\mathcal{L}^{s}(\HH^{2}\oplus\HH^{0})$, the space of linear symmetric applications on
$\HH^{2}\oplus\HH^{0}$. As such $\qT{C}\in\Ela$ can be decomposed into the following block
structure
\begin{equation}
\label{P3C6S3_e2}
\begin{pmatrix}
    \dT{\sigma}^{d}\\ \dT{\sigma}^{s}
\end{pmatrix}=\begin{pmatrix}
\qT{C}^{dd} &\qT{C}^{ds} \\ 
 \qT{C}^{sd} & \qT{C}^{ss}
\end{pmatrix}\begin{pmatrix}
    \dT{\varepsilon}^{d}\\ \dT{\varepsilon}^{s}
\end{pmatrix},
\end{equation}
in which $\dT{t}^{s}, \dT{t}^{d}$ denote respectively the spheric and deviatoric part of
$\dT{t}$. The elasticity tensor $\qT{C}$ decomposes into 3 different terms: $\qT{C}^{dd}$  the
deviatoric elasticity tensor, $\qT{C}^{ss}$ the spherical elasticity tensor, and $\qT{C}^{ds}$
the coupling tensor between the deviatoric and spherical parts. The remaining element is not
independent and is obtained from  $\qT{C}^{ds}$ by transposition.

The Clebsch-Gordan \textcolor{blue}{explicit} harmonic decomposition of $\Ela$ is constructed
to fit within this structure. Formally this corresponds to
\begin{equation}
\label{eq:ECGHD}
\begin{pmatrix}
    \dT{\sigma}^{d}\\ \dT{\sigma}^{s}
\end{pmatrix}=\begin{pmatrix}
\qT{H}+\dT{h}^{a}\boxtimes\dT{1}+\alpha \J &\frac{1}{3}\dT{h}^{b}\otimes\dT{1} \\ 
\frac{1}{3}\dT{1} \otimes \dT{h}^{b} & \beta\qT{K}
\end{pmatrix}\begin{pmatrix}
    \dT{\varepsilon}^{d}\\ \dT{\varepsilon}^{s}
\end{pmatrix},
\end{equation}
in which $\qT{C}^{dd}=\qT{H}+\dT{h}^{a}\boxtimes\dT{1}+\alpha\J$, $\qT{C}^{ss}=\beta\qT{K}$ and
$\qT{C}^{ds}=\frac{1}{3}\dT{h}^{b}\otimes\dT{1}$, $\qT{C}^{sd}=\frac{1}{3}\dT{1} \otimes
\dT{h}^{b}$.

$\boxtimes$ and $\overline{\underline{\otimes}}$ are tensorial products defined as follows:
\begin{align*}
\dT{a}\boxtimes\dT{b}&=\frac{1}{7}\left(
6\left(\dT{a}\ \overline{\underline{\otimes}}\ \dT{b}+\dT{b}\ \overline{\underline{\otimes}}\ \dT{a}\right)
-4\left(\dT{a}\otimes\dT{b}+\dT{b}\otimes\dT{a}\right)
\right),
&
\left(\dT{a}\,\overline{\underline{\otimes}}\,\dT{b}\right)_{ijkl}&=\frac{1}{2}\left(a_{ik}b_{jl}+a_{il}b_{jk}\right).
\end{align*}

The explicit Clebsch-Gordan harmonic decomposition is the object of the following proposition:
\begin{prop}
 \label{P3C6_propo1}
The tensor $\qT{C}\in\Ela$ admits the uniquely defined Clebsch-Gordan Harmonic Decomposition
associated with the use of deviatoric projector $\qT{J}=\qT{I}-\frac{1}{3}\dT{1}\otimes\dT{1}$
and spherical projector $\qT{K}=\frac{1}{3}\dT{1}\otimes\dT{1}$:
  
		\begin{equation}
			\label{P3C6S3_e5}
			\qT{C}=\alpha\J +\beta \K+\dT{h}^{a}\boxtimes\dT{1}+\frac{1}{3}(\dT{h}^{b}\otimes\dT{1}+\dT{1}\otimes\dT{h}^{b})+\qT{H},
		\end{equation}
in which  $\left(\alpha, \beta ,\dT{h}^{a},\dT{h}^{b},\qT{H}\right)\in \HH^0 \times \HH^0\times
\HH^2\times \HH^2 \times \HH^4$.

Conversely, the different elements of the decomposition can be computed as a function of
$\qT{C}$:
\begin{center}
			\begin{tabular}{|c|c|c|}
				\hline
				$\HH^{0}$ & $\HH^{2}$ & $\HH^{4}$  \\
				\hline
                $\alpha=\frac{1}{5}\qT{C}^{dd}\qc\J$&  $\dT{h}^{a}=(\tr_{13}\qT{C}^{dd}):\qT{J}$  &$\qT{H}=\qT{C}^{dd}-\dT{h}^{a}\boxtimes\dT{1}-\alpha \J$\\  
				$\beta=\frac{1}{3}\dT{1}:\qT{C}:\dT{1}$  &$\dT{h}^{b}=\J:\qT{C}:\id$ & \\       
				\hline
			\end{tabular}
		\end{center}
in which $\qT{C}^{dd}=\J:\qT{C}:\J$ is the deviatoric elasticity tensor.
\end{prop}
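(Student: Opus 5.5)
Write $\qT{C}=\qT{C}^{dd}+\qT{C}^{ds}+\qT{C}^{sd}+\qT{C}^{ss}$ with $\qT{C}^{dd}=\J:\qT{C}:\J$, $\qT{C}^{ds}=\J:\qT{C}:\K$, $\qT{C}^{sd}=\K:\qT{C}:\J$, $\qT{C}^{ss}=\K:\qT{C}:\K$. This follows from $\J+\K=\qT{I}$, $\J:\K=0$, and the fact that $\qT{I}$ acts as the identity on $\StRtr$. The plan is to identify each block in turn, then prove existence, uniqueness and the inversion formulas block by block.

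\emph{Spherical and coupling blocks.} Since $\K=\frac13\dT{1}\otimes\dT{1}$, we get $\qT{C}^{ss}=\frac19(\dT{1}:\qT{C}:\dT{1})\,\dT{1}\otimes\dT{1}=\beta\K$ with $\beta=\frac13\dT{1}:\qT{C}:\dT{1}$. Likewise $\qT{C}^{ds}=\frac13(\J:\qT{C}:\dT{1})\otimes\dT{1}$. The tensor $\dT{h}^{b}:=\J:\qT{C}:\dT{1}$ is symmetric, by the minor symmetries, and deviatoric, since it lies in the image of $\J$. Hence $\dT{h}^{b}\in\HH^2$, and by major symmetry $\qT{C}^{sd}=\frac13\dT{1}\otimes\dT{h}^{b}$. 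These scalars and tensors are uniquely determined by $\qT{C}$.

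\emph{Deviatoric block.} $\qT{C}^{dd}$ is a symmetric endomorphism of $\HH^2$, so it lies in the space $\mathcal{L}^s(\HH^2)$ of dimension $15$. Its $\SO(3)$-decomposition is $\mathrm{Sym}^2(\HH^2)\simeq\HH^0\oplus\HH^2\oplus\HH^4$, with $1+5+9=15$. The key steps are the following.
\begin{enumerate}
\item Check that the maps $\alpha\mapsto\alpha\J$ and $\dT{h}\mapsto\dT{h}\boxtimes\dT{1}$ take values in $\mathcal{L}^s(\HH^2)$. For the second, a direct index computation shows that the coefficients $6/7$ and $-4/7$ make $\dT{h}\boxtimes\dT{1}$ traceless on each index pair: the contraction $\tr_{12}$ vanishes when $\tr\dT{h}=0$, and so does the contraction with $\dT{1}$.
\item Compute the $\tr_{13}$-contractions of the three pieces. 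A harmonic $\qT{H}$ has all traces zero. For $\J$ one gets $\tr_{13}\J=\frac{5}{3}\dT{1}$ (or a multiple of $\dT{1}$), so its deviatoric part vanishes and $\J\qc\J=5$. For $\dT{h}\boxtimes\dT{1}$, $\tr_{13}$ returns exactly $\dT{h}$, which is what the normalisation $1/7$ is designed to achieve, and its full contraction vanishes.
\item Deduce $\alpha=\frac15\qT{C}^{dd}\qc\J$ and $\dT{h}^{a}=(\tr_{13}\qT{C}^{dd}):\J$. Then define $\qT{H}:=\qT{C}^{dd}-\dT{h}^{a}\boxtimes\dT{1}-\alpha\J$ and verify that $\qT{H}\in\HH^4$. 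It has the minor and major symmetries and all traces zero, so it remains to show that it is totally symmetric.
\end{enumerate}

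Uniqueness then follows by counting dimensions ($2+10+9=21=\dim\Ela$) together with the injectivity shown by the inversion formulas. Equivariance is immediate because every operation used ($\otimes$, $\boxtimes$, contractions with $\dT{1}$) is $\SO(3)$-equivariant.

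\emph{Main obstacle.} The hardest step is proving that the residual $\qT{H}$ is totally symmetric, not merely traceless with the elasticity index symmetries. I would handle it by Schur's lemma. The space of traceless tensors with the elasticity symmetries is $\mathcal{L}^s(\HH^2)$ with its $\HH^0$ and $\HH^2$ components removed. From the isotypic decomposition $\HH^0\oplus\HH^2\oplus\HH^4$, the $\HH^0$ and $\HH^2$ components are exactly the images of $\alpha\J$ and $\dT{h}\boxtimes\dT{1}$. The residual therefore lies in the unique $\HH^4$ copy, which is the totally symmetric traceless subspace. For this one needs $\HH^4\subset\mathcal{L}^s(\HH^2)$, which is clear because totally symmetric traceless tensors have all index symmetries. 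The remaining trace computations in step 2 are the routine but error-prone part, where the specific constants $6/7$, $4/7$ and $1/5$ have to check out.
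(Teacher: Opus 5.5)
Your proposal is correct. The paper gives no proof of this proposition: it is recalled as a known construction, motivated only by the block structure \eqref{eq:ECGHD} on $\mathcal{L}^s(\HH^2\oplus\HH^0)$, and Section~\ref{S_Proof} proves only Proposition~\ref{prop1} and Theorem~\ref{thm.NumExo}. Your argument follows that same block structure, so it supplies the missing verification rather than an alternative to one in the paper.

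The constants you flag as error-prone all check out. For $\dT{h}\in\HH^2$ one has $(\dT{h}\boxtimes\dT{1}):\dT{1}=\frac17(12-12)\dT{h}=0$ and $\tr_{13}(\dT{h}\boxtimes\dT{1})=\frac17(15-8)\dT{h}=\dT{h}$. One also has $(\dT{h}\boxtimes\dT{1})\qc\J=0$, $\tr_{13}\J=\frac53\dT{1}$ and $\J\qc\J=5$, so the inversion formulas hold exactly as stated.

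Two points deserve to be written out explicitly.

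First, $\tr_{13}\qT{H}=0$ for the residual also needs the spherical part of $\tr_{13}\qT{C}^{dd}$ to be exactly $\frac53\alpha\dT{1}$. This holds because $\tr(\tr_{13}\qT{C}^{dd})=\qT{C}^{dd}\qc\J=5\alpha$.

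Second, it is precisely $\tr_{13}\qT{H}=0$ that removes the $\HH^0$ and $\HH^2$ components in your Schur-lemma step, and that step can be made elementary. Inside $\mathcal{L}^s(\HH^2)$, of dimension $15$, consider the subspaces $\RR\J$, $\{\dT{g}\boxtimes\dT{1}\}_{\dT{g}\in\HH^2}$ and $\HH^4$. They are independent: $\tr_{13}$ kills $\HH^4$ and sends $a\J+\dT{g}\boxtimes\dT{1}$ to $\frac53 a\dT{1}+\dT{g}$, which vanishes only if $a=0$ and $\dT{g}=0$. Since $1+5+9=15$, these subspaces span $\mathcal{L}^s(\HH^2)$. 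Hence any element of $\mathcal{L}^s(\HH^2)$ with vanishing $\tr_{13}$ lies in $\HH^4$. This gives total symmetry of $\qT{H}$ without quoting $\mathrm{Sym}^2(\HH^2)\simeq\HH^0\oplus\HH^2\oplus\HH^4$.
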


From a mechanical standpoint, this construction generalises the classical Bulk–Shear
parametrisation of isotropic tensors~\citep{Auffray2017b}. In the case of isotropy, the
Clebsch-Gordan decomposition reduces to:
\ben
\qT{C}=\alpha\qT{J}+\beta\qT{K},
\een
which corresponds to the \emph{shear modulus} $(G)$ and the \emph{bulk modulus} $(K)$ with
$\alpha=2G$ and $\beta=3K$.

\subsubsection*{Schur-Weyl Harmonic Decomposition (SWHD)}
This decomposition is based on the decomposition of the elasticity tensor into a totally
symmetric (index symmetry) part $\qT{C}^{s}\in {\mathbb{S}}^{4}(\RR^{3})$ and an asymmetric one
$\qT{C}^{a}\in \mathbb{S}^{2}({\mathbb{S}}^{2}(\RR^{3}))$ \red{has been introduced by} \cite{Backus1970}. The corresponding
spaces are orthogonal, we have:
\ben
\qT{C}=\qT{C}^{s}+\qT{C}^{a},
\een
with $C^{s}_{ijkl}=\frac{1}{3}(C_{ijkl}+C_{ikjl}+C_{iljk})$, and
$C^{a}_{ijkl}=\frac{1}{3}(2C_{ijkl}-C_{ikjl}-C_{iljk})$. The tensors $\qT{C}^{s}$ and
$\qT{C}^{a}$ are reducible and decompose into a direct sum of harmonic components as follows:
\begin{align*}
\qT{C}^{s}&=\beta\dT{1}\otimes_{(4)}\dT{1}+\dT{h}^{b}\otimes_{(4)}\dT{1}+\qT{H},
&
\qT{C}^{a}&=\alpha\dT{1}\otimes_{(2,2)}\dT{1}+\dT{h}^{a}\otimes_{(2,2)}\dT{1}.
\end{align*}
The symmetrized tensor products $\otimes_{(4)}$ and $\otimes_{(2,2)}$ between two symmetric
second-order tensors $(\dT{t_{1}},\dT{t_{2}})$ are defined as:
        \begin{align*}
\dT{t_{1}}\otimes_{(4)}\dT{t_{2}}&=\frac{1}{6}\left(\dT{t_{1}}\otimes\dT{t_{2}}+\dT{t_{2}}\otimes\dT{t_{1}}+2\dT{t_{1}}\ \overline{\underline{\otimes}}\ \dT{t_{2}}+2\dT{t_{2}}\ \overline{\underline{\otimes}}\ \dT{t_{1}}\right),
\\
\dT{t_{1}}\otimes_{(2,2)}\dT{t_{2}}&=\frac{1}{3}\left(\dT{t_{1}}\otimes\dT{t_{2}}+\dT{t_{2}}\otimes\dT{t_{1}}-\dT{t_{1}}\ \overline{\underline{\otimes}}\ \dT{t_{2}}-\dT{t_{2}}\ \overline{\underline{\otimes}}\ \dT{t_{1}}\right).
        \end{align*}
        
The explicit Schur-Weyl harmonic decomposition is the object of the following proposition:
    \begin{prop}
    \label{prop2}
        The tensor $\qT{C}\in\Ela$ admits the uniquely defined Schur-Weyl Harmonic
        Decomposition:
        \begin{equation}
        \label{P3C6S3_e10}
\qT{C}=\alpha\dT{1}\otimes_{(2,2)}\dT{1}+\beta\dT{1}\otimes_{(4)}\dT{1}+\dT{h}^{a}\otimes_{(2,2)}\dT{1}+\dT{h}^{b}\otimes_{(4)}\dT{1}+\qT{H},
        \end{equation}
        in which $\left(\alpha, \beta ,\dT{h}^{a},\dT{h}^{b},\qT{H}\right)\in \HH^0 \times
        \HH^0\times \HH^2\times \HH^2 \times \HH^4$. Conversely, the different elements of the
        decomposition can be computed as a function of $\qT{C}$:
\begin{center}
\begin{tabular}{|c|c|c|}
\hline
$\HH^{0}$ & $\HH^{2}$ & $\HH^{4}$  \\
\hline
$\alpha=\frac{1}{4}\tr^{(2)}\qT{C}^{a}$&  
$\dT{h}^{a}=3\left(\tr_{12}\qT{C}^{a}-\frac{4}{3}\alpha\dT{1}\right)$  &\\                    
$\beta=\frac{1}{5}\tr^{(2)}\qT{C}^{s}$  &
$\dT{h}^{b}=\frac{6}{7}\left(\tr_{12}\qT{C}^{s}-\frac{5}{3}\beta\dT{1}\right)$ &
$\qT{H}=\qT{C}^{s}-\dT{h}^{b} \otimes_{(4)} \dT{1}-\beta\dT{1} \otimes_{(4)} \dT{1}$  \\ 
\hline
\end{tabular}
\end{center}
where $\tr^{(2)}$ represents a double trace, we have $\tr^{(2)}\qT{C}^{s}=C^{s}_{iijj}$.
\end{prop}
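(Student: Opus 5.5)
We have to show that the map $(\alpha,\beta,\dT{h}^{a},\dT{h}^{b},\qT{H})\mapsto\qT{C}$ defined by \eqref{P3C6S3_e10} is an $\SO(3)$-equivariant isomorphism onto $\Ela$, and that its inverse is given by the table. The plan splits along the orthogonal decomposition $\Ela=\mathbb{S}^{4}(\RR^3)\oplus\mathbb{S}^{2}(\mathbb{S}^{2}(\RR^3))$, $\qT{C}=\qT{C}^{s}+\qT{C}^{a}$, and treats each summand separately.

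\textbf{Step 1: splitting and dimensions.} First I check that the two operators $C\mapsto C^{s}$ and $C\mapsto C^{a}$ written in the text are complementary projectors on $\Ela$. Their sum is the identity. The first is the total symmetrisation; indeed, on tensors with minor and major symmetries, the three-term average equals the full $S_4$-average. The projectors commute with the $\OO(3)$ action, so the splitting is equivariant, and orthogonality follows from self-adjointness. The dimensions are $\dim\mathbb{S}^4(\RR^3)=15=1+5+9$ and $\dim\mathbb{S}^2(\mathbb{S}^2)$-part $=21-15=6=1+5$. This already matches the claimed harmonic content $\HH^0\oplus\HH^2\oplus\HH^4$ and $\HH^0\oplus\HH^2$.

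\textbf{Step 2: the totally symmetric part.} Here I use the classical harmonic decomposition of symmetric tensors (Maxwell/Sylvester). Every $\qT{C}^{s}\in\mathbb{S}^4$ can be written uniquely as $\qT{H}+\dT{h}\otimes_{(4)}\dT{1}+\beta\,\dT{1}\otimes_{(4)}\dT{1}$, with $\qT{H}$ harmonic and $\dT{h}$ deviatoric; note that $\otimes_{(4)}$ is exactly the symmetrised product. To get the inversion formulas:
\begin{itemize}
\item Take $\tr_{12}$ of this expression. Since $\qT{H}$ is traceless, only the explicit terms $\tr_{12}(\dT{h}\otimes_{(4)}\dT{1})$ and $\tr_{12}(\dT{1}\otimes_{(4)}\dT{1})$ survive, and a direct computation gives $\tfrac{7}{6}\dT{h}$ and $\tfrac{5}{3}\dT{1}$ respectively.
\item A further trace gives $5\beta$.
\end{itemize}
These are the formulas in the table. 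Uniqueness follows because the map to $(\beta,\dT{h},\qT{H})$ is injective, and by the dimension count it is then bijective.

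\textbf{Step 3: the asymmetric part.} The space $\mathbb{S}^2(\mathbb{S}^2)$-complement is the kernel of total symmetrisation, of dimension 6. First, $\dT{t}\otimes_{(2,2)}\dT{1}$ lies in that kernel; I verify this by symmetrising the four terms, which cancel. Second, the map $\dT{t}\mapsto\dT{t}\otimes_{(2,2)}\dT{1}$ from $\StRtr$ is injective. This follows once I compute $\tr_{12}(\dT{t}\otimes_{(2,2)}\dT{1})=\tfrac{1}{3}\big((\tr\dT{t})\dT{1}+\dT{t}\big)$ up to explicit constants. Equivariance plus Schur's lemma then show that the map is an isomorphism onto the 6-dimensional kernel. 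Splitting $\dT{t}$ into its deviatoric part and $\alpha\dT{1}$ and evaluating the traces yields $\alpha=\tfrac14\tr^{(2)}\qT{C}^{a}$ and the given expression for $\dT{h}^{a}$.

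\textbf{Step 4: conclusion and main obstacle.} Adding the two parts gives \eqref{P3C6S3_e10}, and equivariance is inherited from the equivariance of $\otimes_{(4)}$, $\otimes_{(2,2)}$ and of the projectors. The main obstacle is bookkeeping: I expect the hardest part to be getting the normalisation constants ($\tfrac{6}{7}$, $\tfrac{5}{3}$, $3$, $\tfrac43$) right. I would handle this by contracting each product term by term, using $\dT{1}:\dT{h}=0$, and checking each constant on simple cases ($\dT{h}=\mathrm{diag}(1,-1,0)$, isotropic tensors).
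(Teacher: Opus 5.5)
The paper gives no proof of this proposition. It is stated with an attribution to Backus, and the proof section only treats the generic-structure proposition and the counting theorem, so there is no argument of the paper's to compare with. Your proposal is a correct, self-contained verification.

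The constants you were worried about all check out exactly, so no ``up to constants'' caveat is needed. On $\Ela$ the three-term average is the full $S_4$-symmetrisation, and $\otimes_{(4)}$ is the symmetrised product. One then finds
\[
\tr_{12}(\dT{h}\otimes_{(4)}\dT{1})=\tfrac{7}{6}\dT{h},\qquad
\tr_{12}(\dT{1}\otimes_{(4)}\dT{1})=\tfrac{5}{3}\dT{1},\qquad
\tr_{12}(\dT{t}\otimes_{(2,2)}\dT{1})=\tfrac{1}{3}\big((\tr\dT{t})\dT{1}+\dT{t}\big).
\]
These give $\tr^{(2)}\qT{C}^{s}=5\beta$, $\tr^{(2)}\qT{C}^{a}=4\alpha$ and $\tr_{12}\qT{C}^{a}=\tfrac13\dT{h}^{a}+\tfrac43\alpha\dT{1}$, which is exactly the table.

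Three small remarks tighten the argument.

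\begin{enumerate}
\item In Step 3, Schur's lemma is superfluous. Your trace formula already shows that $\dT{t}\mapsto\dT{t}\otimes_{(2,2)}\dT{1}$ is injective: if $\tr_{12}$ vanishes, a further trace gives $\tfrac43\tr\dT{t}=0$, hence $\dT{t}=0$. An injective linear map from the $6$-dimensional $\StRtr$ into the $6$-dimensional kernel of the symmetrisation is then onto.
\item In Step 2, you need not invoke Maxwell--Sylvester for existence. Define $\beta$ and $\dT{h}^{b}$ by the table and set $\qT{H}:=\qT{C}^{s}-\dT{h}^{b}\otimes_{(4)}\dT{1}-\beta\dT{1}\otimes_{(4)}\dT{1}$. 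Then $\tr\dT{h}^{b}=\tfrac67(\tr^{(2)}\qT{C}^{s}-5\beta)=0$, and $\tr_{12}\qT{H}=\tr_{12}\qT{C}^{s}-\tfrac76\dT{h}^{b}-\tfrac53\beta\dT{1}=0$. Since $\qT{H}$ is totally symmetric, this makes it harmonic. Your trace computation gives uniqueness, so existence and uniqueness follow directly.
\item Describe the complement of $\mathbb{S}^{4}(\RR^{3})$ as the kernel of the symmetrisation restricted to $\Ela$, as you effectively do in Step 3. The label $\mathbb{S}^{2}(\mathbb{S}^{2}(\RR^{3}))$, which you copied from the paper, is literally all of $\Ela$. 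Read literally, your Step 1 equation $\Ela=\mathbb{S}^{4}(\RR^3)\oplus\mathbb{S}^{2}(\mathbb{S}^{2}(\RR^3))$ is then dimensionally inconsistent.
\end{enumerate}
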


This decomposition has a physical meaning with regard to the propagation of
waves~\citep{itin2013,itin2015}, damage mechanics~\citep{desmorat2016,ken1984,onat1984} and
\textcolor{blue}{is} also used to define an exotic anisotropic material with isotropic Young's
modulus \citep{Ryc01, He2004}.

From a historical perspective, the $\qT{C}^{s}$ component corresponds to the elasticity model
developed by Navier and Cauchy, based on atomistic considerations with central-force
potentials. Owing to the presence of a single isotropic elastic constant, this model is
sometimes referred to in the literature as the rari-constant model. It stood in contrast to an
alternative elasticity framework derived from thermodynamic principles, notably advanced by
Green. From the viewpoint of lattice dynamics, the $\qT{C}^{a}$ component quantifies the extent
to which the underlying “atomistic potential” deviates from a purely central-force interaction.
Further details on this historical controversy can be found in the following references
\citep{poincare1892,love1944,capecchi2011}.

\section{From symmetry classes to exotic structures}
\label{S_SymExo}

As will become apparent, identifying exotic structures is intimately linked to the study of the
symmetry classes of $\Ela$. We shall thus begin by addressing this aspect before presenting our
main results on exotic structures.

\subsection{Definition of symmetry classes}
\label{Sub_DefSymCla}
Let $\qT{C}$ be an elasticity tensor, its   orbit with respect to $\SO(3)$, denoted by
$\Orb(\qT{C})$, is the set of all images of $\qT{C}$ under the elements of $\SO(3)$
\citep{Abud1983,olive2017a}:
	\begin{equation}
	    	\Orb(\qT{C})=\left \{ \qact{\qT{C}}\in \Ela\mid \qact{\qT{C}}=\vg\star \qT{C} ,  \vg\in\SO(3) \right \}.
	\end{equation}
From a physical point of view, this set represents all elasticity tensors corresponding to the
same elastic material.  Alternatively, one may interpret the orbit of $\qT{C}$ as encoding the
elastic properties of a material when its orientation in space has been "forgotten" or
disregarded.

Symmetry groups of elasticity tensors along an orbit are of the same kind, distinguished only
by their orientation. More precisely, these symmetry groups are conjugate to one another,
meaning that they satisfy the relation:
\ben
\forall \qT{\qact{C}}\in\Orb(\qT{C}),\ 
\exists \vg\in \SO(3) \mid  \mathrm{G}_{ \qT{\qact{C}}}=\vg\mathrm{G}_{\qT{C}}\vg^{-1}.
\een
A weaker equivalence relation than being on the same orbit consists only in having a conjugate
symmetry group:
\ben
 \qT{\qact{C}}\approx \qT{C} \quad \Leftrightarrow \quad\{ \exists \vg\in \SO(3) \mid  \mathrm{G}_{ \qT{\qact{C}}}=\vg\mathrm{G}_{\qT{C}}\vg^{-1}\}.
\een
This relation indicates that two tensors are equivalent if their symmetry groups are conjugate.
The symmetry class $[\mathrm{G}_{\qT{C}}]$ is the conjugacy classes of $\mathrm{G}_{\qT{C}}$:
	\ben
	[\mathrm{G}_{\qT{C}}]=\{\vg \mathrm{G}_{\qT{C}} \vg^{-1}, \mid \vg\in \SO(3) \}.
	\een




\subsection{Symmetry classes of $\Ela$}
\label{Sub_SymClaEla}

The symmetry classes of \( \Ela \) were first determined by \citet{Forte1996}. They proved the
following result:
\begin{theorem}\label{th:ClaEla}
With respect to $\SO(3)$, the symmetry group of any elasticity tensor of $\Ela$ belongs to one
of the following 8 symmetry classes:
\ben
\{[1],[\ZZ_2],[\DD_2],[\DD_3],[\DD_4],[\OO(2)],[\mathcal{O}],[\SO(3)]\}.
\een
\end{theorem}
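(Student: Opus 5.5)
The plan is to exploit the harmonic structure $\Ela\simeq 2\HH^0\oplus 2\HH^2\oplus\HH^4$ and the $\SO(3)$-equivariance of the explicit decomposition $f$ of \autoref{P3C6S3_e5}. Equivariance together with bijectivity gives, for every $\qT{C}=f(\alpha,\beta,\dT{h}^a,\dT{h}^b,\qT{H})$,
\ben
\G_{\qT{C}}=\G_{\dT{h}^a}\cap\G_{\dT{h}^b}\cap\G_{\qT{H}},
\een
since the invariants $\alpha,\beta$ impose no constraint. The set of symmetry classes of $\Ela$ is therefore the set of conjugacy classes of all intersections $H_1\cap(\vg H_2\vg^{-1})\cap(\vg' H_3\vg'^{-1})$, where $[H_1],[H_2]$ range over the classes of $\HH^2$ and $[H_3]$ over those of $\HH^4$. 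This is the \emph{clips operation} $\circledcirc$: the answer is $\mathcal{I}(\HH^2)\circledcirc\mathcal{I}(\HH^2)\circledcirc\mathcal{I}(\HH^4)$, up to the caveat noted below.

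First, the classes of each irreducible factor are computed. A traceless symmetric $\dT{h}\in\HH^2$ is diagonalisable by a rotation. Its symmetry group is therefore $\DD_2$ for three distinct eigenvalues, $\OO(2)$ for two equal eigenvalues, and $\SO(3)$ for $\dT{h}=0$. This gives $\mathcal{I}(\HH^2)=\{[\DD_2],[\OO(2)],[\SO(3)]\}$. For $\HH^4$ I will quote the classical result, obtainable for instance via Maxwell multipoles or via Ihrig–Golubitsky's criterion on isotropy subgroups:
\ben
\mathcal{I}(\HH^4)=\{[1],[\ZZ_2],[\DD_2],[\DD_3],[\DD_4],[\OO(2)],[\mathcal{O}],[\SO(3)]\}.
\een
This already contains all eight classes of the statement. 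Taking $\dT{h}^a=\dT{h}^b=0$ then realises every listed class, which proves that each class actually occurs.

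Second, I will show that no other class appears. Every intersection in question is a closed subgroup of some group in $\mathcal{I}(\HH^4)$. Intersecting with $\DD_2$, $\OO(2)$ or $\SO(3)$ in arbitrary orientations produces only the following subgroups:
\begin{itemize}
\item the trivial group $1$;
\item $\ZZ_2$, generated by a half-turn;
\item $\DD_2$;
\item subgroups of $\OO(2)$ already obtained from $\mathcal{I}(\HH^4)$ itself.
\end{itemize}
The point to check is that intersecting a group from $\mathcal{I}(\HH^4)$ with a conjugate of $\DD_2$ or $\OO(2)$ never creates classes such as $[\ZZ_3]$, $[\ZZ_4]$, $[\SO(2)]$ or $[\mathcal{T}]$. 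I will do this by tabulating the clips $[\DD_2]\circledcirc[K]$ and $[\OO(2)]\circledcirc[K]$ for each $[K]\in\mathcal{I}(\HH^4)$, using the known clip tables (Olive–Auffray).

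The key structural fact is that $\DD_2$ and $\OO(2)$ contain only rotations of order $\le2$ away from their principal axis. Consequently:
\begin{itemize}
\item an element of order $3$ or $4$ surviving the intersection must lie about the $\OO(2)$ axis;
\item the group then also contains the orthogonal half-turns of $\OO(2)$ that lie in $K$;
\item the result is therefore dihedral, $\DD_3$ or $\DD_4$, never cyclic $\ZZ_3$ or $\ZZ_4$.
\end{itemize}
By the same reasoning, $\SO(2)$ cannot arise: any $K\supseteq\SO(2)$ in the list is $\OO(2)$ or $\SO(3)$, and its intersection with $\OO(2)$ or $\DD_2$ contains half-turns.

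The main obstacle is exhaustiveness in this case analysis, especially for cubic $K=\mathcal{O}$ clipped with $\DD_2$ or $\OO(2)$. There, relative orientations can give $\ZZ_2$, $\DD_2$, $\DD_3$, $\DD_4$, or subgroups of $\ZZ_3,\ZZ_4$ type, and one must verify the latter always come with a half-turn pair. I would handle this by listing the axes of $\mathcal{O}$ (3 fourfold, 4 threefold, 6 twofold) and checking which rotation axes an $\OO(2)$ axis or $\DD_2$ frame can share.

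A caveat applies. The clip table only gives an upper bound, because $\dT{h}^a,\dT{h}^b,\qT{H}$ with prescribed groups may not be positionable to realise every intersection. Here this causes no difficulty: the upper bound coincides with $\mathcal{I}(\HH^4)$, which is realised directly, so the classes of $\Ela$ are exactly the eight listed.
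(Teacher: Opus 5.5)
Your proposal is correct and follows the paper's route. The harmonic structure $\Ela\simeq2\HH^{0}\oplus2\HH^{2}\oplus\HH^{4}$ reduces the question to the clips product $\SymC(\HH^{2})\circledcirc\SymC(\HH^{2})\circledcirc\SymC(\HH^{4})$; the paper reads this off the clip table, while you note that $\SymC(\HH^{4})$ alone already realises all eight classes and check that clipping with $[\DD_2]$ or $[\OO(2)]$ adds nothing new.

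Your caveat is unnecessary, though. Each harmonic component can be rotated independently, so every intersection $\Hh_1\cap\vg\Hh_2\vg^{-1}\cap\vg'\Hh_3\vg'^{-1}$ is actually realised, and the clips product gives the classes of a direct sum exactly (the appendix lemma). The compatibility issue raised in the paper concerns only the finer pair data of geometric structures, not the overall class.
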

The relationships between the different symmetry classes are represented in
\autoref{Fig_poset}, on which an arrow from class $[\mathrm{H}_1]$ to class $[\mathrm{H}_2]$
indicates that $\mathrm{H}_1$ is conjugate to a subgroup of $\mathrm{H}_2$.
\begin{figure}[tbp]
\centering
\includegraphics[width=0.45\textwidth]{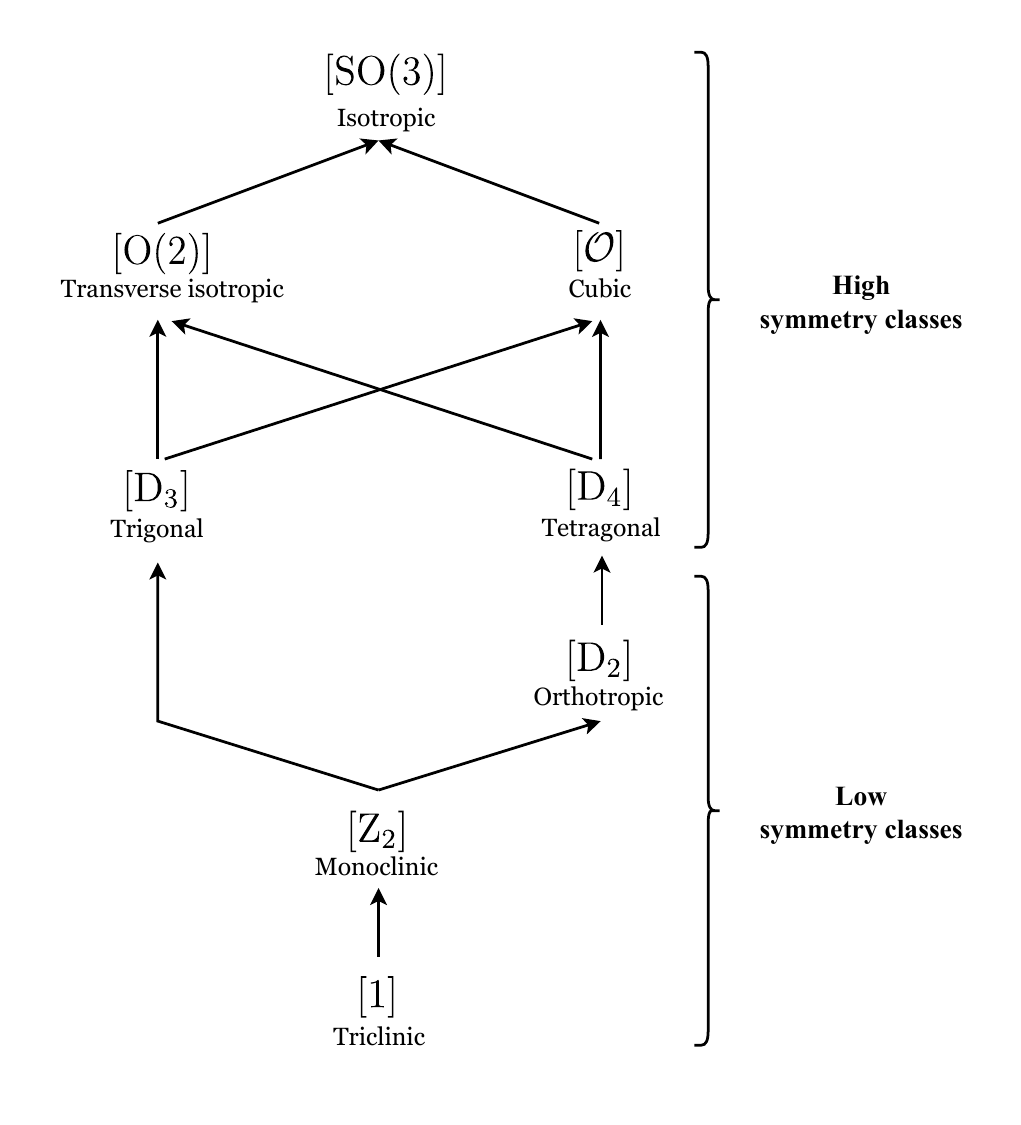}
\caption{The structure of symmetry classes of $\Ela$ as a partially ordered set (poset).}
\label{Fig_poset}
\end{figure}
In this poset, we define symmetry classes above orthotropy $[\DD_2]$ as \emph{high}, with
\emph{low symmetry classes} being triclinic $[1]$, monoclinic $[\ZZ_2]$, and orthotropic
$[\DD_2]$.

In \citet{olive2013, olive2014bs}, the authors proposed a general method for computing the
symmetry classes from the harmonic structure of any tensor space. Central to this method is the
use of the \emph{clips operation}, which allows one to determine the symmetry classes of a
direct sum $\Vv_{1} \oplus \Vv_{2}$ from the symmetry classes of vector spaces $\Vv_{1}$ and
$\Vv_{2}$. To maintain continuity,  technical definitions and properties of clips product are
provided in the \autoref{App_cliOpe}. This approach is effective since the sets of symmetry
classes of $\HH^{n}$ has been determined by ~\cite{ihrig1984}. Denoting $\SymC(\HH^{n})$ the
set of symmetry classes of $\HH^{n}$, the results for $\HH^{2}$ and $\HH^{4}$ are:
\begin{equation}
\label{eq_symmetryclasshs}
    \begin{split}
\SymC(\HH^{2})&=\{[\DD_2],[\OO(2)],[\SO(3)]\} \\
\SymC(\HH^{4})&=\{[1],[\ZZ_2],[\DD_2],[\DD_3],[\DD_4],[\OO(2)],[\mathcal{O}],[\SO(3)]\}
\end{split}
\end{equation}

The result  of \autoref{th:ClaEla} is  directly retrieved by applying clips operations detailed
in \autoref{clipspro}.

\begin{theorem}[Clips operations between $\SO(3)$-closed subgroups,~\cite{Olive2019}]
\label{clipspro}
\small
\ben
\begin{tabular}{|c|c|c|c|c|c|c|c|}
\hline
$\circledcirc$&$[\ZZ_n]$& $[\DD_n]$&$[\mathcal{T}]$&$[\mathcal{O}]$&$[\mathcal{I}]$&$[\SO(2)]$&$[\OO(2)]$ 
\\ \hline
$[\ZZ_m]$& $[1],[\ZZ_d]$& &\multirow{2}{*}{}&\multirow{3}{*}{} & \multirow{4}{*}{}&\multirow{5}{*}{} &\multirow{6}{*}{}
\\ \cline{1-3}
$[\DD_m]$ &\makecell[c]{$[1],[\ZZ_d]$\\ $[\ZZ_{d_{2}}]$}  &\makecell[c]{$[1]$\\$[\ZZ_2],[\ZZ_d]$\\$[\DD_{dz}],[\DD_d]$} & &  & 
 &&
 \\ \cline{1-4}
$[\mathcal{T}]$ & \makecell[c]{$[1]$\\ $[\ZZ_{d_{2}}]$ \\ $[\ZZ_{d_{3}}]$}& \makecell[c]{$[1],[\ZZ_2]$ \\ $[\ZZ_{d_{3}}],[\DD_{d_{2}}]$} & \makecell[c]{$[1],[\ZZ_2]$\\ $[\ZZ_{3}],[\DD_{2}]$\\$[\mathcal{T}]$} &&&&  
\\  \cline{1-5}
$[\mathcal{O}]$&\makecell[c]{$[1]$\\ $[\ZZ_{d_{2}}]$ \\ $[\ZZ_{d_{3}}]$\\$[\ZZ_{d_{4}}]$} & \makecell[c]{$[1],[\ZZ_2]$ \\ $[\ZZ_{d_{3}}],[\ZZ_{d_{4}}]$\\$[\DD_{d_{2}}],[\DD_{d_{3}}]$\\$[\DD_{d_{4}}]$} & \makecell[c]{$[1],[\ZZ_2]$\\ $[\ZZ_{3}],[\DD_{2}]$\\$[\mathcal{T}]$} & \makecell[c]{$[1],[\ZZ_2]$\\ $[\ZZ_{3}],[\ZZ_{4}]$\\$[\DD_{2}],[\DD_{3}]$\\$[\DD_{4}],[\mathcal{O}]$}&&&
\\ \cline{1-6}
$[\mathcal{I}]$&\makecell[c]{$[1]$\\ $[\ZZ_{d_{2}}]$ \\ $[\ZZ_{d_{3}}]$\\$[\ZZ_{d_{5}}]$} & \makecell[c]{$[1],[\ZZ_2]$ \\ $[\ZZ_{d_{3}}],[\ZZ_{d_{5}}]$\\$[\DD_{d_{2}}],[\DD_{d_{3}}]$\\$[\DD_{d_{5}}]$} & \makecell[c]{$[1],[\ZZ_2]$\\ $[\ZZ_{3}],[\mathcal{T}]$} &\makecell[c]{$[1],[\ZZ_2]$\\ $[\ZZ_{3}],[\DD_{3}]$\\ $[\mathcal{T}]$} & \makecell[c]{$[1],[\ZZ_2]$\\ $[\ZZ_{3}],[\ZZ_{5}]$\\$[\DD_{3}],[\DD_{5}]$\\$[\mathcal{I}]$}&&
\\ \cline{1-7}
$[\SO(2)]$ & $[1],[\ZZ_n]$ & \makecell[c]{$[1],[\ZZ_2]$\\$[\ZZ_n]$} & \makecell[c]{$[1],[\ZZ_2]$\\$[\ZZ_3]$} &\makecell[c]{$[1],[\ZZ_2]$\\$[\ZZ_3],[\ZZ_4]$}&\makecell[c]{$[1],[\ZZ_2]$\\$[\ZZ_3],[\ZZ_5]$}  &\makecell[c]{$[1]$\\$[\SO(2)]$} &
\\ \cline{1-8}
$[\OO(2)]$ &\makecell[c]{$[1],[\ZZ_{d_{2}}]$\\$[\ZZ_n]$} & \makecell[c]{$[1],[\ZZ_2]$\\$[\DD_{k_{2}}],[\DD_n]$}& \makecell[c]{$[1],[\ZZ_2]$\\$[\ZZ_3],[\DD_{2}]$} &\makecell[c]{$[1],[\ZZ_2]$\\$[\DD_2],[\DD_3]$\\$[\DD_{4}]$} & \makecell[c]{$[1],[\ZZ_2]$\\$[\DD_2],[\DD_3]$\\$[\DD_{5}]$} &\makecell[c]{$[1],[\ZZ_2]$\\$[\SO(2)]$} &\makecell[c]{$[\ZZ_2],[\DD_2]$\\$[\OO(2)]$}
\\ \hline
\end{tabular}
\een
with the following notations:
\begin{align*}
d&:=\gcd(m,n),    \quad &d_{2}&:=\gcd(n,2), \quad  &k_{2}&:=3-d_{2},\\
d_{3}&:=\gcd(n,3),\quad &d_{5}&:=\gcd(n,5), &\ZZ_{1}&=1, \\
dz&:=2, \ \textrm{if $m$ and $n$ even}, \quad  &dz&:=1, \ \textrm{otherwise,} \quad&& \\
d_4&:=4, \ \textrm{if $4$ divide $n$},  \quad  &d_4&:=1, \ \textrm{otherwise.}\quad &&
\end{align*}
in which $gcd$ stands for greater common divisor.
\end{theorem}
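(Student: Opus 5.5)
The table is a catalogue of the clips operation $[\mathrm{H}_1]\circledcirc[\mathrm{H}_2]$, the set of conjugacy classes of intersections $\mathrm{H}_1\cap \vg\mathrm{H}_2\vg^{-1}$ as $\vg$ ranges over $\SO(3)$. My plan is to compute these intersections directly and case by case. The operation is commutative, since $\mathrm{H}_1\cap\vg\mathrm{H}_2\vg^{-1}$ is conjugate to $\vg^{-1}\mathrm{H}_1\vg\cap\mathrm{H}_2$, so only the lower triangle needs to be filled.

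Each closed subgroup is encoded by its rotation data, meaning the set of axes together with the orders of the rotations about each axis. An element of $\mathrm{H}_1\cap\vg\mathrm{H}_2\vg^{-1}$ is a rotation whose axis and angle are admissible in both groups. The intersection is therefore generated by rotations about the axes common to both configurations, each of order $\gcd$ of the two available orders.

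For each pair of classes I would proceed in three steps.
\begin{enumerate}
\item List the axis types of each group: the principal $n$-fold axis and the $2$-fold secondary axes for $\ZZ_n$ and $\DD_n$; the $2$-, $3$- and $4$-fold axes for $\mathcal{O}$ (and the analogous ones for $\mathcal{T}$ and $\mathcal{I}$); and the $\SO(2)$ axis with its perpendicular $2$-fold axes for $\OO(2)$.
\item Enumerate the possible coincidence patterns of axes under a rotation $\vg$: no common axis (generic position), exactly one common axis of a given type, or a common axis together with a common perpendicular $2$-fold axis.
\item For each pattern, read off the intersection. Generic position gives $[1]$. One common axis gives $[\ZZ_{\gcd}]$. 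A common axis plus a common perpendicular axis gives $[\DD_{\gcd}]$, provided the perpendicular $2$-fold axes can actually be aligned.
\end{enumerate}
For example, in $[\DD_m]\circledcirc[\DD_n]$ one aligns the principal axes and gets $\ZZ_d$ or $\DD_d$. Alternatively one aligns a secondary axis of one group with the principal axis of the other, giving $\ZZ_2$, or with a secondary axis, giving $\ZZ_2$ or $\DD_{dz}$. Here $\DD_2$ requires both $m$ and $n$ even, so that the perpendicular $2$-fold axes exist on both sides, which is where $dz$ comes from.

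For the exceptional groups, the $\gcd(n,2)$, $\gcd(n,3)$ and $d_4$ terms arise because an $n$-fold axis is aligned with a $2$-, $3$- or $4$-fold axis of the polyhedron. The factor $d_4$ is either $4$ or $1$ because a $4$-fold axis of $\mathcal{O}$ meeting an $n$-fold axis with $2\mid n$ but $4\nmid n$ yields $\ZZ_2$, which is already counted through the $2$-fold axes. The exceptional–exceptional entries, such as $\mathcal{O}\circledcirc\mathcal{I}$, are obtained by listing the subgroups of the smaller group that are realizable as intersections; for instance $\DD_3$ arises because $\mathcal{O}$ and $\mathcal{I}$ can share a $3$-fold axis together with perpendicular $2$-fold axes. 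For the continuous groups one uses the fact that $\SO(2)\cap\vg\mathrm{H}\vg^{-1}$ is either trivial or the cyclic group about the common axis. This gives the $[\ZZ_k]$ entries and, for $\OO(2)$, the dihedral ones; $\OO(2)\circledcirc\OO(2)$ gives $\ZZ_2$ when the principal axes are perpendicular, $\DD_2$ when a principal axis is also perpendicular to a secondary axis, and $\OO(2)$ when the principal axes coincide.

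The main obstacle will be exhaustiveness and realizability in the exceptional cases. One must show that every intersection is one of the listed classes and that each listed class is attained by some $\vg$. For exhaustiveness I would use the fact that a finite intersection is a subgroup of both groups. Its class must therefore lie in the intersection of their subgroup-class lattices, and one then checks each candidate geometrically. For example, $\mathcal{T}\subset\mathcal{I}$ is attained, whereas $\DD_2$ does not appear in $\mathcal{I}\circledcirc\mathcal{T}$; the latter requires verifying that three mutually perpendicular $2$-fold axes of $\mathcal{I}$ force a full $\mathcal{T}$. Since the statement is attributed to \cite{Olive2019}, I would follow that systematic geometric argument and cite it for the most tedious exceptional entries.
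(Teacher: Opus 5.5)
\textbf{The paper does not prove this table; it imports it from \cite{Olive2019}.} That source argues exactly by the axis-by-axis comparison you describe, so your strategy is the right one. The gap is that you assert it reproduces the table \emph{as printed} without running the checks, and four printed cells fail them.

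(i) In $[\mathcal{T}]\circledcirc[\mathcal{T}]$ the class $[\DD_2]$ is not attained, and your own $[\mathcal{I}]\circledcirc[\mathcal{T}]$ argument shows it. $\mathcal{T}$ has a unique Klein subgroup $V$, so $V\subset \vg\mathcal{T}\vg^{-1}$ forces $\vg\in N(V)=\mathcal{O}$. Since $\mathcal{T}$ is normal in $\mathcal{O}$, this gives $\vg\mathcal{T}\vg^{-1}=\mathcal{T}$.

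(ii) In $[\OO(2)]\circledcirc[\DD_n]$ the subscript $k_2=3-d_2$ is reversed. For $n=3$ the printed cell asserts $[\DD_2]$, which is impossible since $4\nmid 6$. For $n=4$ it omits $[\DD_2]$, which you obtain by putting the principal axis of $\DD_4$ perpendicular to the $\OO(2)$ axis with a secondary axis along it. The correct entry is $[\DD_{d_2}]$.

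(iii) $[\mathcal{T}]$ is missing from $[\mathcal{I}]\circledcirc[\mathcal{I}]$. The two golden icosahedral groups on a cube frame both contain the same $\mathcal{T}$. They are exchanged by a quarter-turn about a coordinate axis, and $\mathcal{T}$ is maximal in $\mathcal{I}$, so their intersection is exactly $\mathcal{T}$.

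(iv) $[\DD_2]$ is missing from $[\mathcal{I}]\circledcirc[\mathcal{O}]$. Place a Klein subgroup $V\subset\mathcal{I}$ inside a cube group $\mathcal{O}'$ as a \emph{non-normal} Klein subgroup, i.e.\ the half-turns about one face axis and two edge axes. Then $\mathcal{I}\cap\mathcal{O}'$ is $V$, $\mathcal{T}_V$ or $\mathcal{I}$, where $\mathcal{T}_V$ is the tetrahedral subgroup of $\mathcal{I}$ containing $V$. Now $\mathcal{T}_V\not\subset\mathcal{O}'$, because the only tetrahedral subgroup of $\mathcal{O}'$ has the normal Klein group of $\mathcal{O}'$ as its only Klein subgroup, so the intersection is $V$.

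A correct execution of your plan therefore proves a corrected table, not the stated one. These slips do not affect the paper's elasticity results: the $\mathcal{T}$ and $\mathcal{I}$ cells are never used, and only the class $[\DD_n]$ of $[\OO(2)]\circledcirc[\DD_n]$ matters there.

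\textbf{Your continuous-group cases also need repair.} The rule ``generic position gives $[1]$'' fails for $[\OO(2)]\circledcirc[\OO(2)]$. The half-turn axes of $\OO(2)_{\V{a}}$ fill the great circle $\V{a}^{\perp}$, and two such circles always meet. So two copies in general position share the half-turn about $\V{a}\times\V{b}$, which is why $[1]$ is absent from that cell. Your labels there are also swapped: perpendicular principal axes give $\DD_2$ (half-turns about $\V{a}$, $\V{b}$, $\V{a}\times\V{b}$), and $\ZZ_2$ is the generic case.

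More generally, steps 2--3 are only a heuristic, and they err in both directions. The rule ``a common axis plus a common perpendicular 2-fold axis gives $\DD_{\gcd}$'' ignores that such an alignment may force further coincidences. In $\mathcal{T}\circledcirc\mathcal{T}$, aligning a Klein subgroup forces all of $\mathcal{T}$ to coincide, which is exactly how the spurious $[\DD_2]$ arises. Conversely, the patterns you list never produce multi-axis intersections such as $\mathcal{T}$ in $\mathcal{I}\circledcirc\mathcal{I}$. For the exceptional rows the decisive tool is the subgroup-lattice argument you mention at the end. It should be sharpened by normalizer and maximality facts: $N(\DD_2)=\mathcal{O}$, $\mathcal{T}$ is normal in $\mathcal{O}$, each Klein four-group lies in a unique tetrahedral group, and $\mathcal{T}$ is maximal in $\mathcal{I}$. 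Carry this out cell by cell rather than deferring those cells to the citation.
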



Beyond simply stating that there are 8 different symmetry classes, the clips operation provides
more refined information about the harmonic structures associated with each one of these
classes. Indeed, one can observe that each symmetry classes can be obtained from various
combinations of symmetry classes of the covariant within the harmonic triplet
$(\dT{h}^{a},\dT{h}^{b},\qT{H})$. To be more specific,  the symmetry class for $\qT{C}$ is the
result of:
\begin{enumerate}
    \item the symmetry classes of the individual covariants:
    \ben
    \{[\G_{\dT{h}^{a}}],[\G_{\dT{h}^{b}}],[\G_{\qT{H}}]\},
    \een
    \item the symmetry classes of covariants taken pairwise:
    \ben
\{[\G_{(\dT{h}^{a},\dT{h}^{b})}],[\G_{(\dT{h}^{a},\qT{H})}],[\G_{(\dT{h}^{b},\qT{H})}]\},
\een
with
\ben
 [\G_{(\dT{h}^{a},\dT{h}^{b})}]\in[\G_{\dT{h}^{a}}]\circledcirc [\G_{\dT{h}^{b}}], \quad [\G_{(\dT{h}^{a},\qT{H})}]\in[\G_{\dT{h}^{a}}]\circledcirc [\G_{\qT{H}}], \quad [\G_{(\dT{h}^{b},\qT{H})}]\in[\G_{\dT{h}^{b}}]\circledcirc [\G_{\qT{H}}] .
\een
\end{enumerate}
As a result, to each elasticity tensor $\qT{C}$, we associate the following ordered list:
\begin{equation}
\label{P3C7S1e1}
\{\underbrace{[\G_{\dT{h}^{a}}],[\G_{\dT{h}^{b}}],[\G_{\qT{H}}]}_{singletons},\underbrace{[\G_{(\dT{h}^{a},\dT{h}^{b})}],[\G_{(\dT{h}^{a},\qT{H})}],[\G_{(\dT{h}^{b},\qT{H})}]}_{pairs},\underbrace{[\G_{( \dT{h}^{a},\dT{h}^{b},\qT{H})}]}_{triplet}\}.
\end{equation}
This list, hereafter referred to as the \emph{geometric structure}, constitutes the geometric
identity card of an elasticity tensor. This geometric structure can also be represented (see
\autoref{fig:Simplex}) as a chain of simplices: the 2-simplex encodes the class of the triplet,
its boundary consists of three 1-simplices representing the classes of the pairs, and the
boundary of the boundary consists of points representing the classes of the elementary
covariants.


Among the various geometric structures, and for each of the 8 symmetry classes, we can identify
one structure that we shall refer to as \emph{generic}.
\begin{defn}[Generic structure and exotic structures]
\label{def_generac_exotic}
For a prescribed symmetry class $[\G]$ with $\G \in \SO(3)$, there exists:
	\begin{enumerate}
    \item \textbf{one generic structure:} the geometric structure realizing the symmetry class $[\G]$ for which all components and pairs exhibit minimal symmetry.  A randomly chosen elasticity tensor belonging to the symmetry class $[\G]$ is almost surely generic.
    \item \textbf{exotic structures:} all other structures in which at least one element is more symmetric than strictly required for belonging to the symmetry class $[\G]$.
	\end{enumerate}
\end{defn}
For instance, a generic triclinic tensor refers to an elasticity tensor obtained by randomly selecting an
element from $\Ela$. We define below its \textit{generic structure}:
\begin{enumerate}
    \item each covariant should be of least symmetry:
    \ben
    \{[\G_{\dT{h}^{a}}],[\G_{\dT{h}^{b}}],[\G_{\qT{H}}]\}=\{[\DD_2],[\DD_2], [1]\},
\een
    \item the covariants should be in arbitrary relative orientations with respect to one another:
    \ben
    \{[\G_{(\dT{h}^{a},\dT{h}^{b})}],[\G_{(\dT{h}^{a},\qT{H})}],[\G_{(\dT{h}^{b},\qT{H})}]\}=\{[1],[1],[1]\}.
\een
\end{enumerate}
The geometric structure is illustrated in \autoref{fig:SimplexGeneric}.

\begin{figure}[tbp]
\centering
\begin{subfigure}[b]{0.49\textwidth}
\centering
\resizebox{!}{101pt}{ \begin{tikzcd} [sep =  0.8 cm]
 & & \black{\left[\mathrm{G}_{\qT{H}}\right]}
 \arrow[ddddll, ->, black,"{[\G_{(\dT{h}^{a},\qT{H})}]}" description]
 \arrow[ddddrr, ->, black, "{[\G_{(\dT{h}^{b},\qT{H})}]}" description]
 & &   \\
&&&& \\
 & & \black{\left[\G_{(\dT{h}^{a},\dT{h}^{b},\qT{H})}\right]}  & &   \\
 &&&& \\
\black{\left[\mathrm{G}_{\dT{h}^{a}}\right]}
\arrow[rrrr, ->, black,"{[\G_{(\dT{h}^{a},\dT{h}^{b})}]}" description]
& & & & \black{\left[\mathrm{G}_{\dT{h}^{b}}\right]}  \\
 \end{tikzcd}
 }
\caption{The \emph{geometric structure} as a sequence of simplices.}
\label{fig:Simplex}
\end{subfigure}
\hfill
\begin{subfigure}[b]{0.49\textwidth}
\centering
\resizebox{!}{101pt}{ \begin{tikzcd} [sep =  0.8 cm]
 & & \black{\left[1\right]} \arrow[ddddll, ->,black,"{[1]}" description] \arrow[ddddrr, ->,black, "{[1]}" description] & &   \\
&&&& \\
 & & \black{\left[1\right]}  & &   \\
 &&&& \\
\black{\left[\DD_2\right]} \arrow[rrrr,->, black,"{[1]}" description] & & & & \black{\left[\DD_2\right]}  \\
 \end{tikzcd}
 }
\caption{The \emph{generic structure} of a triclinic elasticity tensor.}
\label{fig:SimplexGeneric}
\end{subfigure}
\caption{Geometric structure of an elasticity tensor represented as a sequence of simplices (a), and the generic structure of a triclinic elasticity tensor (b).}
\label{fig:Simplex_combined}
\end{figure}

Any other geometric structure leading to a triclinic elasticity tensor will be regarded as
\textit{non-generic}, and will therefore be referred to as \textit{exotic} in the present
context.

Based on this definition, we can determine the generic anisotropic structures corresponding to
each of the 7 non-trivial symmetry classes. Their geometric structures are obtained by
symmetrizing a generic triclinic tensor with respect to a representative of the group $\mathrm
G$.

\begin{prop}\label{prop1}
The generic geometric structure of the 8 symmetry classes of $\Ela$ are provided in the table
below.
\begin{center}
    
  \begin{tabular}{|c||c|c|c|c|c|c|}
 		\hline		
$[\G]$&$[\G_{\dT{h}^{a}}]$ &$[\G_{\dT{h}^{b}}]$ &$[\G_{\qT{H}}]$&$[\G_{(\dT{h}^{a},\dT{h}^{b})}]$&$[\G_{(\dT{h}^{a},\qT{H})}]$&$[\G_{(\dT{h}^{b},\qT{H})}]$ \\
 		\hline
            \hline 

$[\SO(3)]$&$[\SO(3)]$&$[\SO(3)]$&$[\SO(3)]$&$[\SO(3)]$&$[\SO(3)]$&$[\SO(3)]$ \\
 		\hline

$[\mathcal{O}]$&$[\SO(3)]$&$[\SO(3)]$&$[\mathcal{O}]$&$[\SO(3)]$&$[\mathcal{O}]$&$[\mathcal{O}]$  \\
\hline

 $[\OO(2)]$&$[\OO(2)]$&$[\OO(2)]$ &$[\OO(2)]$&$[\OO(2)]$&$[\OO(2)]$&$[\OO(2)]$\\
 \hline

$[\DD_4]$& $[\OO(2)]$ &$[\OO(2)]$ &$[\DD_4]$&$[\OO(2)]$&$[\DD_4]$&$[\DD_4]$ \\
  \hline
  
$[\DD_3]$& $[\OO(2)]$
&$[\OO(2)]$&$[\DD_3]$&$[\OO(2)]$&$[\DD_3]$&$[\DD_3]$\\	
 			\hline
            
$[\DD_2]$& $[\DD_2]$
&$[\DD_2]$&$[\DD_2]$&$[\DD_2]$&$[\DD_2]$&$[\DD_2]$\\	
\hline

$[\ZZ_2]$& $[\DD_2]$
&$[\DD_2]$&$[\ZZ_2]$&$[\ZZ_2]$&$[\ZZ_2]$&$[\ZZ_2]$\\
\hline

${[1]}$& $[\DD_2]$
&$[\DD_2]$&$[1]$&$[1]$&$[1]$&$[1]$\\
\hline
\end{tabular}
\end{center}
\end{prop}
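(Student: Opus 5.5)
The plan is to follow the recipe stated just before the proposition. Take a generic triclinic tensor, i.e.\ a harmonic triplet $(\dT{h}^{a},\dT{h}^{b},\qT{H})$ with structure $\{[\DD_2],[\DD_2],[1],[1],[1],[1]\}$. Symmetrize it with respect to a fixed representative $\G$ of each class. By equivariance of the explicit decomposition $f$, a tensor $\qT{C}$ lies in $\Fix(\Ela,\G)$ if and only if each covariant lies in the corresponding fixed space: $\dT{h}^{a},\dT{h}^{b}\in\Fix(\HH^2,\G)$ and $\qT{H}\in\Fix(\HH^4,\G)$. The task then splits into two steps. First, determine for each $\G$ the fixed-point subspaces of $\HH^2$ and $\HH^4$. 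Second, read off the symmetry group of a generic element of each fixed space, and of generic pairs. Since a generic element of a linear subspace lies in the open stratum of minimal symmetry within it, the ``almost surely'' claim follows once the minimal isotropy in each fixed space is identified.

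For the singletons I will use the classical dimension counts. The dimension of $\Fix(\HH^2,\G)$ is $5,1,1,1,1,3,0$ and that of $\Fix(\HH^4,\G)$ is $9,3,2,1,1,1,1$ (or $0$) for $\G=1,\ZZ_2,\DD_2,\DD_3,\DD_4,\OO(2),\mathcal{O},\SO(3)$. These can be obtained by trace formulas or by writing the fixed tensors explicitly in the axis frame. For $\HH^2$ this gives: for $\G\in\{\DD_3,\DD_4,\OO(2)\}$ the fixed space is the line spanned by $\V{n}\otimes\V{n}-\frac13\dT{1}$, whose nonzero elements have isotropy $\OO(2)$; for $\mathcal{O}$ and $\SO(3)$ the fixed space is $\{0\}$, so the symmetry is $[\SO(3)]$; for $\DD_2,\ZZ_2,1$ a generic deviator has three distinct eigenvalues, giving $[\DD_2]$. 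For $\HH^4$, a generic element of $\Fix(\HH^4,\G)$ has symmetry exactly $\G$. This holds because $\SymC(\HH^4)$ contains every class of \autoref{th:ClaEla}, and the fixed space of $\G$ contains points not fixed by any strictly larger group, as seen from a dimension comparison such as $\dim\Fix(\HH^4,\DD_4)=2>\dim\Fix(\HH^4,\OO(2))=1$.

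For the pairs I will use that $\G_{(u,v)}=\G_u\cap\G_v$, with both containing $\G$. If both groups are the same $\OO(2)$ (same axis), the pair has symmetry $\OO(2)$. If one is $\SO(3)$, the pair has the symmetry of the other covariant. If one is $\G_{\qT{H}}=\G$ and the other contains $\G$, the intersection is $\G$. For $\DD_2$ the two deviators are diagonal in a common frame, and a generic pair (non-proportional diagonal deviators) has isotropy exactly $\DD_2$. For $\ZZ_2$ and $1$ a generic pair of deviators has eigenframes in general position, and intersecting with $\G_{\qT{H}}$ gives $\ZZ_2$, respectively $1$. These intersections agree with the admissible entries of \autoref{clipspro}, which serves as a consistency check.

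The main obstacle I expect is rigorously confirming that a generic element of $\Fix(\HH^4,\G)$ has isotropy exactly $\G$ and not some larger conjugate-containing group. This is delicate for $\DD_3$ and $\DD_4$, whose one-parameter anisotropic parts must be distinguished from $\OO(2)$ and $\mathcal{O}$ directions. I would handle this by writing the explicit normal forms of Forte and Vianello, for instance $\qT{H}=\lambda\,\mathbf{H}_{\OO(2)}+\mu\,\mathbf{H}_{\DD_n}$, and checking that for $\mu\neq0$ no extra rotation preserves it. One way is to compare the dimensions of the fixed spaces of all overgroups and note that a finite union of proper subspaces cannot fill $\Fix(\HH^4,\G)$.
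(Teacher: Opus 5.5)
Your argument follows the paper's route and the table it produces is right. The paper's Curie-principle step is your observation that $\qT{C}\in\Fix(\Ela,\G)$ forces every covariant into $\Fix(\HH^n,\G)$. The paper then takes the least symmetric classes compatible with $[\G]$ from $\SymC(\HH^n)$ and from the clips table restricted to classes $\geq[\G]$. You instead compute the generic isotropy of each fixed space and intersect isotropy groups directly. Your version has two advantages:
\begin{itemize}
\item it shows which element of a clips product is actually realised, e.g.\ that for $[\DD_3]$ the pair $(\dT{h}^{a},\dT{h}^{b})$ is $[\OO(2)]$ and not $[\ZZ_2]$ or $[\DD_2]$, because both tensors lie on the same line;
\item it justifies that all minima are attained simultaneously, almost surely.
\end{itemize}

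The dimension counts you quote are wrong, and one of them contradicts your own key step.
\begin{itemize}
\item Your $\HH^2$ list has seven entries for eight groups. For $\G=1,\ZZ_2,\DD_2,\DD_3,\DD_4,\OO(2),\mathcal{O},\SO(3)$ the correct values are $\dim\Fix(\HH^2,\G)=5,3,2,1,1,1,0,0$ and $\dim\Fix(\HH^4,\G)=9,5,3,2,2,1,1,0$.
\item A quick check: $2+2\dim\Fix(\HH^2,\G)+\dim\Fix(\HH^4,\G)$ recovers $\dim\Fix(\Ela,\G)=21,13,9,6,6,5,3,2$.
\item With your listed $\dim\Fix(\HH^4,\DD_4)=1$, the inclusion $\Fix(\HH^4,\OO(2))\subseteq\Fix(\HH^4,\DD_4)$ would be an equality. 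Then $[\DD_4]$ could not be a class of $\HH^4$. The value $2$ you use later, and your two-parameter normal form, are the correct ones.
\end{itemize}

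The finiteness you need for $[\DD_3]$ and $[\DD_4]$ does hold. The only overgroups of a given $\DD_n$ with nonzero fixed space in $\HH^4$ are:
\begin{itemize}
\item $\OO(2)$, together with the $\DD_{kn}$, $k\geq 2$, which fix the same line;
\item the two cubic groups containing that $\DD_n$.
\end{itemize}
So you are removing three lines from a plane.

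One smaller slip: for $[\ZZ_2]$ and $[1]$, the pair $(\dT{h}^{a},\dT{h}^{b})$ involves no intersection with $\G_{\qT{H}}$. It is simply $\G_{\dT{h}^{a}}\cap\G_{\dT{h}^{b}}$: two $\DD_2$ groups sharing only the $\ZZ_2$ axis, respectively sharing nothing.
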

\noindent The proof is given in \autoref{S_Proof}.

\subsection{Exotic structures for symmetry classes higher than orthotropic}
\label{Sub_ExoSet}
The purpose of this subsection is to highlight the two elementary mechanisms by which an exotic
structure can emerge from a generic one\footnote{In $\mathbb{R}^2$, as studied in
\citep{Mou2023}, the degeneracy of elasticity tensors arises from two distinct mechanisms: (1)
\emph{cancellation}, and (2) \emph{alignment}. As detailed below, the degeneracies observed in
$\mathbb{R}^3$ naturally extend this framework.}. It should be noted that certain combinations
of these degeneracies may lead to a change in the symmetry class of the elasticity tensor. The
two fundamental mechanisms are as follows\textcolor{blue}{:}
\begin{itemize}
    \item \textbf{Type I (Norms): Degeneracy on individual components $\{\dT{h}^{a},\dT{h}^{b},\qT{H}\}$}:\\
For vectors, i.e., elements of $\HH^1$, only two symmetry classes are possible: isotropic
$[\SO(3)]$ and anisotropic $[\OO(2)]$. A vector is either non-zero, in which case its symmetry
class is $[\OO(2)]$, or it vanishes identically, resulting in the fully isotropic class
$[\SO(3)]$. In contrast, for elements in $\HH^n$ with $n > 1$, the situation becomes more
nuanced, as partial cancellations may occur giving rise to a more intricate hierarchy of
symmetry classes. For example, in the case of a second-order harmonic tensor $\dT{h} \in
\mathbb{H}^2$, one observes the chain of symmetry classes:
\ben
[\DD_{2}]\longrightarrow[\OO(2)]\longrightarrow[\SO(3)]. 
\een
The lattice of partial cancellations becomes even more elaborate for fourth-order tensors
$\mathbb{H}^4$, a situation thoroughly examined in \cite{auffray2014}. The presence of these
\textcolor{blue}{degeneracies} is revealed by the first three entries of the geometric
structure list.



\item \textbf{Type II (Relative Orientation): Degeneracy on coupled pairs $\{(\dT{h}^{a},\dT{h}^{b}),(\dT{h}^{a},\qT{H}),(\dT{h}^{b},\qT{H})\}$}:\\
This situation accounts for the relative orientation between covariants. For vectors, i.e.,
elements of $\HH^1$, the situation \textcolor{blue}{is} relatively simple: they are either
aligned or not, and their relative alignment can be described by angles. For $\HH^n$, $n>1$ the
situation is, as expected, more complex, since elements in $\HH^2$ are five-dimensional, and
those in $\HH^4$ are nine-dimensional.

To illustrate this, let us consider the case of a pair of elements $(\dT{h}^{a}, \dT{h}^{b})$
in $\HH^2$. For the sake of the example, these covariants will be generic, that is
$[\G_{\dT{h}^{a}}] = [\G_{\dT{h}^{b}}] = [\DD_2]$. Since elements in $\HH^2$ are symmetric
second-order tensors, their relative orientation can be understood through their eigenvectors.
Accordingly, the symmetry class of the pair, indicated by $[\G_{(\dT{h}^{a}, \dT{h}^{b})}]$ is
either:
\begin{itemize}
\item $[\DD_2]$ if $\dT{h}^{a}$ and $\dT{h}^{b}$ share the same set of eigenvectors;
\item $[\ZZ_2]$ if $\dT{h}^{a}$ and $\dT{h}^{b}$ share exactly one eigenvector;
\item $[1]$ if $\dT{h}^{a}$ and $\dT{h}^{b}$ have no eigenvectors in common.
\end{itemize}
The situation becomes significantly more intricate when the covariant $\qT{H} \in \HH^4$ is
taken into account.
\end{itemize}

Thus, for a given symmetry class of the triplet $(\dT{h}^{a},\dT{h}^{b}, \qT{H})$, we will
refer to as degenerate any type I or type II modification of the generic structure that does
not result in a change of the initial symmetry class. The degeneracy mechanisms can be directly
read off from the \emph{geometric structure}:
\begin{equation}
\{\underbrace{[\G_{\dT{h}^{a}}],[\G_{\dT{h}^{b}}],[\G_{\qT{H}}]}_{Type I},\underbrace{[\G_{(\dT{h}^{a},\dT{h}^{b})}],[\G_{(\dT{h}^{a},\qT{H})}],[\G_{(\dT{h}^{b},\qT{H})}]}_{Type II},\underbrace{[\G_{(\dT{h}^{a},\dT{h}^{b},\qT{H})}]}_{Result}\}.
\end{equation}
These two types of degeneracies give rise to multiple \textit{geometric structures}, all
yielding the same overall symmetry class $[\G]$. This forms the fundamental basis of what we
refer to as \emph{exotic structures}.

This naturally leads to the question of how many different geometric structures can be
identified. To answer this question, it should first be noted that not every “geometric
structure” is admissible. Indeed, while the first three entries of such a structure—related to
the norms of the covariants—may be freely assigned, the entries describing their relative
orientations are not. These are constrained:
\begin{itemize}
\item by the symmetry classes of the covariants. For example, if two covariants are individually invariant under $\SO(3)$,
their combined pair cannot exhibit any symmetry other than $\SO(3)$. These compatibility constraints are explicitly given by the clip tables;
\item by pairwise compatibility: once the relative orientation of two pairs is fixed, the third one is necessarily constrained.
To be more specific, \textcolor{red}{choosing}  $[\G_{(\dT{h}^{a},\dT{h}^{b})}]\in[\G_{\dT{h}^{a}}]\circledcirc [\G_{\dT{h}^{b}}], \quad [\G_{(\dT{h}^{a},\qT{H})}]\in[\G_{\dT{h}^{a}}]\circledcirc [\G_{\qT{H}}]$ may restrict admissible choices for $[\G_{(\dT{h}^{b},\qT{H})}]=[\G_{\dT{h}^{b}}]\circledcirc [\G_{\qT{H}}]$, as the symmetry operations of $\dT{h}^{b}$ and $\qT{H}$ must align with those already defined by $[\G_{(\dT{h}^{a},\dT{h}^{b})}]$ and $[\G_{(\dT{h}^{a},\qT{H})}]$. This constraint is not captured by the clip tables and must be enforced explicitly, particularly for the pairs used to generate low symmetry classes.
\end{itemize}

\noindent As a consequence, the complete enumeration of exotic structures cannot be carried out
by a purely combinatorial approach\footnote{The situation is different from the $\mathbb{R}^2$
case for which direct combinatorial enumeration was possible.}. But, for the symmetry classes higher
than $[\DD_2]$, the problem reduces to a combinatorial case.

In this first contribution, we restrict our analysis to symmetry classes strictly higher than
$[\DD_{2}]$, namely the orthotropic class. This choice is motivated by several considerations:
\begin{enumerate}
    \item Most engineering materials belong to symmetry classes that are equal to or higher than orthotropy; thus, this restriction focuses on cases of greatest practical relevance;
    \item The analysis of pair compatibility problems is relatively straightforward in the considered setting, and the partial result we propose can be demonstrated without the need to examine an exhaustive catalog of special cases.
\end{enumerate}
The resolution of the full problem is nevertheless of great importance and will be addressed in
a forthcoming contribution. Here, we prefer to emphasize the geometric nature and practical
relevance of the approach, rather than provide an exhaustive list that might obscure the most
significant results.


Our main result is the following one:
\begin{thm}[Number of exotic structures in $\Ela$ for symmetry class greater than orthotropic]\label{thm.NumExo}
Given the representation $\Ela$ of the rotation group $\SO(3)$, it comprises 18 exotic
structures in total for symmetry class higher than $[\DD_{2}]$. They are respectively 6 for
$[\DD_3]$, 6 for $[\DD_4]$, 6 for $[\OO(2)]$ and 0 for both $[\mathcal{O}]$ and $[\SO(3)]$.
\end{thm}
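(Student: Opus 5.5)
The plan is to reduce the count, class by class, to a finite combinatorial enumeration of admissible geometric structures \eqref{P3C7S1e1}, using fixed-point subspaces. Fix a representative $\G\in\{\DD_3,\DD_4,\OO(2),\mathcal{O},\SO(3)\}$ with its principal axis along $\V{e}_3$. Since the explicit harmonic decomposition $f$ is $\SO(3)$-equivariant and an isomorphism, $\qT{C}\in\Fix(\Ela,\G)$ if and only if each of $\dT{h}^{a}$, $\dT{h}^{b}$ and $\qT{H}$ is $\G$-invariant. Moreover, $\G_{\qT{C}}=\G_{\dT{h}^{a}}\cap\G_{\dT{h}^{b}}\cap\G_{\qT{H}}$. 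Every admissible geometric structure of class $[\G]$ is therefore obtained by choosing each covariant in its own fixed-point space, and then requiring that the intersection of the three symmetry groups be exactly conjugate to $\G$.

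\textbf{Step 1 (second-order covariants).} I will first compute $\Fix(\HH^2,\G)$. For $\G=\DD_3,\DD_4,\OO(2)$, this space is one-dimensional and spanned by the uniaxial deviator $\V{e}_3\otimes\V{e}_3-\frac13\dT{1}$. The argument is that any invariant deviator must have $\V{e}_3$ as an eigenvector and be isotropic in the orthogonal plane, because an $n$-fold rotation with $n\ge3$ acts on the in-plane part of $\HH^2$ as a rotation by angle $2\cdot2\pi/n\neq 0 \bmod 2\pi$. For $\G=\mathcal{O},\SO(3)$, the space $\Fix(\HH^2,\G)$ is $\{0\}$. Hence $[\G_{\dT{h}^{a}}],[\G_{\dT{h}^{b}}]\in\{[\OO(2)],[\SO(3)]\}$ in the first three cases, and both equal $[\SO(3)]$ in the last two. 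Furthermore, whenever a covariant is non-zero its $\OO(2)$ axis is forced to be $\V{e}_3$.

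\textbf{Step 2 (the fourth-order covariant and the pairs).} Next I will identify which strata of $\HH^4$ meet $\Fix(\HH^4,\G)$, using the lattice of \cite{auffray2014}. A closed subgroup of $\SO(3)$ that contains $\DD_3$ (resp.\ $\DD_4$) and is a symmetry group of some element of $\HH^4$ is conjugate to $\DD_3$, $\OO(2)$, $\mathcal{O}$ or $\SO(3)$ (resp.\ $\DD_4$, $\OO(2)$, $\mathcal{O}$ or $\SO(3)$). In each case, the embedding is such that $\V{e}_3$ is a 3-fold (resp.\ 4-fold) axis of the cube, or the $\OO(2)$ axis.

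The key group-theoretic facts are the following. For the subgroup $\OO(2)_{\V{e}_3}$ of rotations about $\V{e}_3$: if $\V{e}_3$ is a 3-fold axis of the cube then $\OO(2)_{\V{e}_3}\cap\mathcal{O}=\DD_3$, and if $\V{e}_3$ is a 4-fold axis then $\OO(2)_{\V{e}_3}\cap\mathcal{O}=\DD_4$. Also, an $\OO(2)$-invariant $\qT{H}$ invariant under $\DD_n$ must have axis $\V{e}_3$, which makes the total group at least $\OO(2)$. Consequently, for $[\DD_3]$ (resp.\ $[\DD_4]$) one must have $[\G_{\qT{H}}]\in\{[\DD_3],[\mathcal{O}]\}$ (resp.\ $\{[\DD_4],[\mathcal{O}]\}$). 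For $[\OO(2)]$, one has $[\G_{\qT{H}}]\in\{[\OO(2)],[\SO(3)]\}$, and for $[\mathcal{O}]$, one has $[\G_{\qT{H}}]=[\mathcal{O}]$.

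Because every non-zero covariant has its axis pinned to $\V{e}_3$ (or is pinned to the cube oriented by $\V{e}_3$), each pair class is then uniquely determined by the two singleton classes: it is the intersection of the corresponding groups in the fixed frame. So in the high classes there is no type II freedom, and a geometric structure is determined by its triple of singleton classes. This is exactly the reduction to a combinatorial problem announced before the theorem, and I will also check it against the clip table of \autoref{clipspro}.

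\textbf{Step 3 (counting).} For $[\DD_3]$ there are $2\times2\times2=8$ triples. I remove the triple $([\SO(3)],[\SO(3)],[\mathcal{O}])$, whose class is cubic, not $[\DD_3]$. I then check that every remaining triple has intersection exactly $\DD_3$, which uses $\OO(2)_{\V{e}_3}\cap\mathcal{O}=\DD_3$ when $\qT{H}$ is cubic and at least one $\dT{h}$ is non-zero. This leaves $7$ admissible structures. One of them is the generic structure of \autoref{prop1}, so there are $6$ exotic ones. The same count, with $\DD_4$ in place of $\DD_3$, gives $6$ for $[\DD_4]$. For $[\OO(2)]$ there are again $8$ triples; I remove the all-$[\SO(3)]$ triple (isotropic) and the generic one, leaving $6$. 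For $[\mathcal{O}]$ the only triple is $([\SO(3)],[\SO(3)],[\mathcal{O}])$, which is the generic structure, so there are $0$ exotic structures. For $[\SO(3)]$ everything vanishes, again giving $0$. The total is $6+6+6=18$.

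I also need non-emptiness of each admissible structure, that is, to exhibit a tensor realising it. This is immediate by taking the non-zero covariants as explicit multiples of the uniaxial deviator, together with an explicit $\DD_n$-, $\OO(2)$- or cubic harmonic $\qT{H}$ in the chosen frame.

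\textbf{Main obstacle.} The delicate point is Step 2: proving rigorously that no other stratum of $\HH^4$ (for instance $[\OO(2)]$ with a tilted axis, or a $[\DD_n]$-type $\qT{H}$ with a different orientation) can produce total symmetry exactly $[\DD_3]$ or $[\DD_4]$, and that the pairs carry no hidden orientational freedom. I would handle this first by computing $\Fix(\HH^4,\DD_3)$ and $\Fix(\HH^4,\DD_4)$ explicitly; these have dimensions $2$ and $2$ respectively, spanned by the $\OO(2)$ zonal harmonic about $\V{e}_3$ and one further harmonic. I would then classify the symmetry of each element of this plane directly, checking that the only jumps are to the appropriately oriented cubic tensor and to the $\OO(2)$ line.
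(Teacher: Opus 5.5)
Your proof is correct, but it takes a genuinely different route from the paper's.

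The paper never fixes a frame. It filters the clips table of Theorem~\ref{clipspro} down to classes above $[\DD_2]$, which gives the 23 candidate structures of Table~\ref{SymCla_Complet} (5 generic, 18 exotic), and only then checks admissibility. The 16 candidates with an isotropic covariant reduce to a two-term clips product and are accepted immediately. The two candidates with singleton classes $([\OO(2)],[\OO(2)],[\mathcal{O}])$ are handled by Lemma~\ref{theorem5}: two transversely isotropic deviators forming a transversely isotropic pair are proportional. This is combined with $\mathcal{O}\cap\OO(2)_{\V{n}}=\DD_4$ or $\DD_3$, depending on the axis.

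You instead fix the representative $\G$, use $\G_{\qT{C}}=\G_{\dT{h}^{a}}\cap\G_{\dT{h}^{b}}\cap\G_{\qT{H}}$, and work inside $\Fix(\HH^2,\G)$ and $\Fix(\HH^4,\G)$. The fact that $\Fix(\HH^2,\G)$ is a line plays the role of the proportionality lemma, and the cube-axis intersection is common to both arguments.

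What your route buys is that completeness, existence of each structure, and the absence of Type II freedom all follow from one argument. Every non-zero covariant is pinned to $\V{e}_3$, or to a cube containing the fixed $\DD_k$. In the paper, these points are read off the filtered clips table. The paper's route, in exchange, needs no normal-form computation and reuses tabulated clips data.

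Two minor remarks. First, your Step~2 already resolves what you call the main obstacle: a tilted $\OO(2)$ or a differently oriented dihedral group cannot contain the fixed $\DD_k$. The explicit stratification of $\Fix(\HH^4,\DD_k)$ is therefore only a check. Second, that plane contains two cubic lines, not one. Exactly two cubes contain a given $\DD_k$, and they are exchanged by the rotation of angle $\pi/k$ about $\V{e}_3$. Both give $\OO(2)_{\V{e}_3}\cap\mathcal{O}'=\DD_k$, so the pair classes and the count are unaffected.
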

The corresponding complete structure of generic sets and exotic structures for overall symmetry
classes greater than $[\DD_{2}]$ are shown in \autoref{SymCla_Complet}. The detailed proofs are
provided in \autoref{S_Proof}. The labels with exponent \emph{g} representing for
\emph{generic} and \emph{e} for \emph{exotic}.
\begin{table}[htbp]
    \centering
 	\begin{tabular}{|c||c|c|c|c|c|c|c|c|}
 		\hline		
$[\G]$&$[\G_{\dT{h}^{a}}]$ &$[\G_{\dT{h}^{b}}]$ &$[\G_{\qT{H}}]$&$[\G_{(\dT{h}^{a},\dT{h}^{b})}]$&$[\G_{(\dT{h}^{a},\qT{H})}]$&$[\G_{(\dT{h}^{b},\qT{H})}]$&Label&Generic/Exotic \\
 		\hline
            \hline 
       
$[\SO(3)]$&\cellcolor{codegray}$[\SO(3)]$&\cellcolor{codegray}$[\SO(3)]$&\cellcolor{codegray}$\cellcolor{codegray}[\SO(3)]$&\cellcolor{codegray}$[\SO(3)]$&\cellcolor{codegray}$[\SO(3)]$& \cellcolor{codegray}$[\SO(3)]$&\cellcolor{codegray}$[\SO(3)]^{g}$&\cellcolor{codegray}Generic  \\
 		\hline

$[\mathcal{O}]$&\cellcolor{codegray}$[\SO(3)]$&\cellcolor{codegray}$[\SO(3)]$&\cellcolor{codegray}$[\mathcal{O}]$&\cellcolor{codegray}$[\SO(3)]$&\cellcolor{codegray}$[\mathcal{O}]$&\cellcolor{codegray}$[\mathcal{O}]$&\cellcolor{codegray}$[\mathcal{O}]^{g}$&\cellcolor{codegray}Generic  \\
 		\hline

 \multirow{7}{*}{\centering $[\OO(2)]$}& \cellcolor{codegray}$[\OO(2)]$&\cellcolor{codegray}$[\OO(2)]$ &\cellcolor{codegray}$[\OO(2)]$&\cellcolor{codegray}$[\OO(2)]$&\cellcolor{codegray}$[\OO(2)]$&\cellcolor{codegray}$[\OO(2)]$&\cellcolor{codegray}$[\OO(2)]^{g}$&\cellcolor{codegray}Generic \\
 \cline{2-9}
 &$[\SO(3)]$&$[\OO(2)]$ &$[\OO(2)]$&$[\OO(2)]$&$[\OO(2)]$&$[\OO(2)]$&$[\OO(2)]^{e}_{1}$&\multirow{6}{*}{\centering Exotic} \\
 \cline{2-8}
   &$[\OO(2)]$ &$[\SO(3)]$&$[\OO(2)]$&$[\OO(2)]$&$[\OO(2)]$&$[\OO(2)]$&$[\OO(2)]^{e}_{2}$& \\
 \cline{2-8}
 &$[\OO(2)]$ &$[\OO(2)]$ &$[\SO(3)]$&$[\OO(2)]$&$[\OO(2)]$&$[\OO(2)]$&$[\OO(2)]^{e}_{3}$& \\
 \cline{2-8}		
&$[\SO(3)]$ &$[\SO(3)]$&$[\OO(2)]$&$[\SO(3)]$&$[\OO(2)]$&$[\OO(2)]$&$[\OO(2)]^{e}_{4}$& \\
 \cline{2-8}
  &$[\SO(3)]$&$[\OO(2)]$&$[\SO(3)]$&$[\OO(2)]$&$[\SO(3)]$&$[\OO(2)]$&$[\OO(2)]^{e}_{5}$& \\
 \cline{2-8}
 &$[\OO(2)]$ &$[\SO(3)]$&$[\SO(3)]$&$[\OO(2)]$&$[\OO(2)]$&$[\SO(3)]$&$[\OO(2)]^{e}_{6}$& \\
 				\hline

 		\multirow{7}{*}{\centering $[\DD_4]$}& \cellcolor{codegray}$[\OO(2)]$ &\cellcolor{codegray}$[\OO(2)]$ &\cellcolor{codegray}$[\DD_4]$&\cellcolor{codegray}$[\OO(2)]$&\cellcolor{codegray}$[\DD_4]$&\cellcolor{codegray}$[\DD_4]$&\cellcolor{codegray}$[\DD_4]^{g}$&\cellcolor{codegray}Generic \\
 		\cline{2-9}		
&$[\SO(3)]$ &$[\OO(2)]$ & $[\DD_4]$&$[\OO(2)]$&$[\DD_4]$&$[\DD_4]$&$[\DD_4]^{e}_{1}$&\multirow{6}{*}{\centering Exotic} \\
\cline{2-8}	
&$[\OO(2)]$ &$[\SO(3)]$&$[\DD_4]$&$[\OO(2)]$&$[\DD_4]$&$[\DD_4]$&$[\DD_4]^{e}_{2}$& \\
\cline{2-8}	
&$[\OO(2)]$ &$[\OO(2)]$&$[\mathcal{O}]$&$[\OO(2)]$&$[\DD_4]$&$[\DD_4]$&$[\DD_4]^{e}_{3}$& \\
\cline{2-8}	
&$[\SO(3)]$ &$[\SO(3)]$ & $[\DD_4]$&$[\SO(3)]$&$[\DD_4]$&$[\DD_4]$&$[\DD_4]^{e}_{4}$&\\
\cline{2-8}	
&$[\SO(3)]$ &$[\OO(2)]$ & $[\mathcal{O}]$&$[\OO(2)]$&$[\mathcal{O}]$&$[\DD_4]$&$[\DD_4]^{e}_{5}$&\\
\cline{2-8}
&$[\OO(2)]$ &$[\SO(3)]$ & $[\mathcal{O}]$&$[\OO(2)]$&$[\DD_4]$&$[\mathcal{O}]$&$[\DD_4]^{e}_{6}$&\\
\hline	
           
 			\multirow{7}{*}{\centering $[\DD_3]$}&\cellcolor{codegray} $[\OO(2)]$
&\cellcolor{codegray}$[\OO(2)]$&\cellcolor{codegray}$[\DD_3]$&\cellcolor{codegray}$[\OO(2)]$&\cellcolor{codegray}$[\DD_3]$&\cellcolor{codegray}$[\DD_3]$&\cellcolor{codegray}$[\DD_3]^{g}$&\cellcolor{codegray}Generic\\	
\cline{2-9}
&$[\SO(3)]$ &$[\OO(2)]$ & $[\DD_3]$&$[\OO(2)]$&$[\DD_3]$&$[\DD_3]$&$[\DD_3]^{e}_{1}$&\multirow{6}{*}{\centering Exotic} \\
\cline{2-8}	
&$[\OO(2)]$ &$[\SO(3)]$&$[\DD_3]$&$[\OO(2)]$&$[\DD_3]$&$[\DD_3]$&$[\DD_3]^{e}_{2}$& \\
\cline{2-8}	
&$[\OO(2)]$ &$[\OO(2)]$&$[\mathcal{O}]$&$[\OO(2)]$&$[\DD_3]$&$[\DD_3]$&$[\DD_3]^{e}_{3}$& \\
\cline{2-8}	
&$[\SO(3)]$ &$[\SO(3)]$ & $[\DD_3]$&$[\SO(3)]$&$[\DD_3]$&$[\DD_3]$&$[\DD_3]^{e}_{4}$&\\
\cline{2-8}	
&$[\SO(3)]$ &$[\OO(2)]$ & $[\mathcal{O}]$&$[\OO(2)]$&$[\mathcal{O}]$&$[\DD_3]$&$[\DD_3]^{e}_{5}$&\\
\cline{2-8}
&$[\OO(2)]$ &$[\SO(3)]$ & $[\mathcal{O}]$&$[\OO(2)]$&$[\DD_3]$&$[\mathcal{O}]$&$[\DD_3]^{e}_{6}$&\\
\hline	

\end{tabular}\\
\caption{Complete symmetry class structures for $[\G]$ higher than $[\DD_2]$.}
\label{SymCla_Complet}
\end{table}
\FloatBarrier

The analysis of the table shows that, for the symmetry classes under consideration, the
degeneration mechanism is exclusively of Type I. That is, once the combinatorial structure
based on the nature of the elementary covariants is established, the symmetry classes of the
resulting pairs are fully determined — there is no remaining degree of freedom in this respect.
This is what makes the situation relatively simple under the current assumption, reducing the
problem to one of elementary combinatorics.

Let us analyse in more detail the physical meaning of the situations reported in the table. In
the definition of an exotic elastic material, we adopted the condition of being
\emph{hypersymmetric}, meaning that the material lies between two symmetry classes. A first,
straightforward observation is that no exotic hypersymmetric structure exists for the cubic
class. If such a structure did exist, it would have to lie between the cubic class $[\oct]$ and
the isotropic class $[\SO(3)]$. For the other symmetry classes considered, their corresponding
exotic structures are organized in the lattice diagrams in \autoref{fig:illustration}.
\begin{figure}[tbp]
\centering
\begin{subfigure}[b]{0.45\textwidth}
    \input{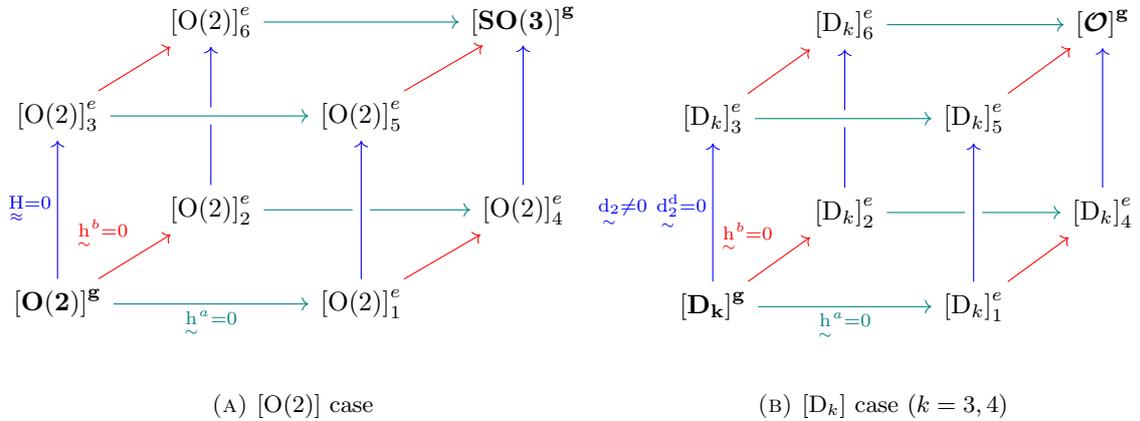}
\caption{$[\OO(2)]$ case}
\label{fig1:illustration}
\end{subfigure}
\begin{subfigure}[b]{0.45\textwidth}
\begin{tikzcd} [sep = .6 cm]
& \black{\left[\DD_k\right]^{e}_{6}} \arrow [rr, teal, crossing over] \arrow[from=dd,blue] & & \black{\mathbf{\left[\mathbfcal{O}\right]^{g}}}   \\
 \black{\left[\DD_k\right]^{e}_{3}}\arrow [rr, teal, crossing over]\arrow[ru,red] & & \black{\left[\DD_k\right]^{e}_{5}} \arrow[ru,red]\\
 & \black{\left[\DD_k\right]^{e}_{2}}   \arrow [rr, teal] & & \black{\left[\DD_k\right]^{e}_{4}} \arrow[uu,blue]\\
\black{\mathbf{\left[\mathbf{D}_{k}\right]^{g}}}\arrow [rr,"\dT{h}^{a}=0" below, teal] \arrow [ru,"\dT{h}^{b}=0"  ,red]\arrow[uu,"\dT{d_{2}}\neq0\ \dT{d_{2}^{d}}=0",blue] & & \black{\left[\DD_k\right]^{e}_{1}} \arrow [uu, crossing over, blue]\arrow[ru,red]\\
\end{tikzcd}
\caption{$[\DD_{k}]$ case ($k=3, 4$)}
\label{fig2:illustration} 
\end{subfigure}
\caption{Lattice of exotic structures}
\label{fig:illustration}
\end{figure}
\noindent Apart from the three covariants $\{{\dT{h}}^{a},{\dT{h}}^{b},\qT{H}\}$ of $\qT{C}$,
the covariant $\dT{d_{2}}:=\tr_{13}(\qT{H}:\qT{H})$ is also considered (the detailed covariants
construction has been computed in [\citep{olive2022}, Table2]) to characterize the symmetry
classes of elasticity tensors. $\dT{d_{2}^{d}}$ is defined as the deviatoric part of
$\dT{d_{2}}$.

In these diagrams, the generic structures—i.e., those obtained purely from symmetry
constraints—are highlighted in bold. The exotic cases, labelled from 1 to 6, occupy various
intermediate levels between the initial generic class and a higher generic symmetry class. In
these tables, it is clear that:
\begin{itemize}
\item Exotic structures of type $[\OO(2)]$ naturally lie between this class and the class $[\SO(3)]$;
\item Exotic structures of type $[\DD_k]$ for $k = 3, 4$ naturally lie between this class and the class $[\oct]$.
\end{itemize}
The diagram for exotic \red{sructures} $[\DD_k]$ with $k = 3, 4$ is based on the following result \cite{olive2022}:
\begin{thm}[Characterisation of the cubic class of $\qT{C}\in\Ela$]
Let $\qT{C}=f(\alpha,\beta,{\dT{h}}^{a},{\dT{h}}^{b},\qT{H})\in\Ela$ be an harmonic
decomposition of an elasticity tensor $\qT{C}$, where $\alpha, \beta$ are scalars,
${\dT{h}}^{a},{\dT{h}}^{b}\in\HH^{2}$ and $\qT{H}\in\HH^{4}$. Then,
\ben
\qT{C}\in\strata{\mathcal{O}}\  \text{if and only if}\  {\dT{h}^{a}}={\dT{h}}^{b}=\dT{d_{2}^{d}}=0\  \text{and } \dT{d_{2}}\neq0
\een
\end{thm}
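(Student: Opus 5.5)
Since $f$ is an $\SO(3)$-equivariant isomorphism, the symmetry group of $\qT{C}$ is the intersection
\ben
\G_{\qT{C}}=\G_{\dT{h}^{a}}\cap\G_{\dT{h}^{b}}\cap\G_{\qT{H}},
\een
the scalars $\alpha,\beta$ being invariant. The statement therefore reduces to a statement about the triplet, and ultimately about $\qT{H}\in\HH^4$ alone. A useful preliminary identity is the trace of $\dT{d_{2}}$: contracting the remaining pair of indices of $\tr_{13}(\qT{H}:\qT{H})$ gives $\tr\dT{d_{2}}=H_{ijmn}H_{mnij}=\|\qT{H}\|^{2}$. Hence $\dT{d_{2}}\neq 0$ if and only if $\qT{H}\neq 0$. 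This disposes of the condition $\dT{d_{2}}\neq0$ in both directions.

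\emph{Necessity.} Assume $\G_{\qT{C}}$ is conjugate to $\oct$, and rotate so that $\G_{\qT{C}}=\oct$. Then $\dT{h}^{a},\dT{h}^{b}\in\Fix(\HH^2,\oct)$. Since $\oct$ acts irreducibly on $\RR^3$, Schur's lemma makes every $\oct$-invariant symmetric second-order tensor a multiple of $\dT{1}$; being traceless, these tensors vanish. With $\dT{h}^{a}=\dT{h}^{b}=0$ we get $\G_{\qT{C}}=\G_{\qT{H}}$, so $\qT{H}$ is $\oct$-invariant and not isotropic, hence $\qT{H}\neq0$ and $\dT{d_{2}}\neq0$. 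Moreover $\dT{d_{2}}$ is an $\SO(3)$-covariant of $\qT{H}$, so it is $\oct$-invariant and thus proportional to $\dT{1}$, i.e. $\dT{d_{2}^{d}}=0$.

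\emph{Sufficiency.} Now $\dT{h}^{a}=\dT{h}^{b}=0$, so $\G_{\qT{C}}=\G_{\qT{H}}$, and $\qT{H}\neq0$ by the trace identity. It remains to show that a nonzero $\qT{H}\in\HH^4$ with $\dT{d_{2}^{d}}(\qT{H})=0$ has symmetry class exactly $[\oct]$. I would argue by contradiction over the classes of $\SymC(\HH^4)$ other than $[\oct]$ and $[\SO(3)]$, in three steps.
\begin{enumerate}
\item For $[\OO(2)]$, use the one-parameter normal form (axis $\V{n}$). A direct computation gives $\dT{d_{2}^{d}}=c\,\lambda^{2}\,(\V{n}\otimes\V{n}-\tfrac13\dT{1})$ with $c\neq0$, which is never zero.
\item For $[\DD_4]$ and $[\DD_3]$, use the two-parameter normal forms $\qT{H}=a\,\qT{H}_{\OO(2)}+b\,\qT{H}_{k}$. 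By equivariance $\dT{d_{2}^{d}}$ is proportional to $\V{n}\otimes\V{n}-\tfrac13\dT{1}$, with coefficient a quadratic form $q(a,b)$. I expect its zero set to be exactly the ratio at which $\qT{H}$ coincides with a rotated cubic harmonic tensor, which has symmetry $[\oct]$ rather than $[\DD_k]$.
\item For the low classes $[\DD_2],[\ZZ_2],[1]$, use a dimension/orbit argument. The locus $\{\dT{d_{2}^{d}}=0\}$ is cut out by five equations in the nine-dimensional $\HH^4$, and the cubic orbits swept by scaling also form a four-dimensional set. One then needs to show these two sets coincide.
\end{enumerate}

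Step 3 is the main obstacle: a dimension count does not exclude extra components of the zero locus. To control it, I would first pass to the binary-form (Sylvester–Maxwell) model, where $\HH^4$ is identified with binary octavics $f$ and $\dT{d_{2}}$ with the transvectant $(f,f)_6$. I would then try to prove, by working in the reduced $\DD_2$ normal form (three parameters after fixing an orthonormal frame), that $(f,f)_6$ has vanishing traceless part only on the octahedral octavics. Failing a clean computation, I would invoke the covariant-based characterisation of the strata of $\HH^4$ given in \citep{olive2022}, from which the cubic case is read off directly.
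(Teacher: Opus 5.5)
\textbf{Verdict: genuine gap.} Your proof of necessity is complete, but the converse for $\HH^4$ is not proved, and your proposed route for the low-symmetry classes cannot prove it.

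\textbf{What is correct.} The following steps all hold:
\begin{itemize}
\item the reduction $\G_{\qT{C}}=\G_{\dT{h}^{a}}\cap\G_{\dT{h}^{b}}\cap\G_{\qT{H}}$;
\item the identity $\tr\dT{d_{2}}=\|\qT{H}\|^{2}$, which gives $\dT{d_{2}}\neq0\Leftrightarrow\qT{H}\neq0$;
\item the Schur argument that kills $\oct$-invariant deviators;
\item the Curie argument giving $\dT{d_{2}^{d}}=0$.
\end{itemize}

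\textbf{Where the content lies.} Everything substantive is in the converse for $\HH^4$ alone: a nonzero $\qT{H}\in\HH^4$ with $\dT{d_{2}^{d}}=0$ has class exactly $[\oct]$. You do not prove this.

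\textbf{Step 2 is only announced, but can be closed.} In the binary model, the cross term between the weight-$0$ and weight-$\pm k$ parts of $(f,f)_6$ vanishes for weight reasons. Hence $q(a,b)=c_1a^2+c_2b^2$, with $c_2\neq0$ by Step 1. The cube viewed along a $k$-fold axis lies in $\Fix(\HH^4,\DD_k)$ with $ab\neq0$, so $c_1c_2<0$. The two resulting zero lines are exchanged by the rotation of angle $\pi/k$ about the axis, so both consist of cubic tensors.

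\textbf{Step 3 is a genuine gap, and your route cannot close it.} The three-parameter $\DD_2$ normal form contains only tensors of class $[\DD_2]$ or higher. A tensor of class $[\ZZ_2]$ or $[1]$ is $\DD_2$-invariant in no frame; fixing a frame by rotation leaves six parameters, not three. These low-symmetry strata are exactly where an extra component of the five-equation locus $\{\dT{d_{2}^{d}}=0\}$ could sit. As you note yourself, the dimension count does not exclude one.

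\textbf{What is missing.} You need an argument valid for an arbitrary real $\qT{H}\in\HH^4$, not for normal forms of prescribed symmetry. It must show that the real zero set of $\dT{d_{2}^{d}}$ is exactly the closed cubic stratum. A minor slip: $(f,f)_6$ corresponds to $\dT{d_{2}^{d}}$, not to $\dT{d_{2}}$; the trace part sits in $(f,f)_8$.

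\textbf{Relation to the paper.} The paper gives no proof of this characterisation; it imports it verbatim from \cite{olive2022}. So your fallback of invoking that reference is all the paper does. As a self-contained proof, however, your proposal establishes only necessity. The sufficiency then rests entirely on a citation of the statement itself.
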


\section{Particular cases of 3D Transverse Isotropic (TI) exotic materials}\label{S_ExoMaterials}

The results of the previous section remain abstract, as we merely enumerated the possible
exotic geometric structures for elasticity tensors with symmetry classes strictly higher than
orthotropy. To make these results effective—i.e., to give rise to actual exotic elastic
materials—one must choose an explicit harmonic decomposition. Such a choice corresponds to
selecting specific parametrisations for the $\HH^0$ and $\HH^2$ components, which in turn
allows for the mechanical interpretation of the exotic structures. As previously discussed,
such an explicit harmonic decomposition is not unique. We will focus here only on the two
decompositions introduced in \autoref{Sub_ExpDec}: the Clebsch–Gordan harmonic decomposition
(CGHD) and  the Schur–Weyl one (SWHD).

We will consider here three explicit examples of exotic Transverse Isotropic (TI) materials.
Each of these exotic cases is obtained by imposing an extra covariant condition on generic TI
case. To avoid \textcolor{red}{being} too abstract, for each of these examples, we will provide
an explicit matrix representation of an associated stiffness tensor. The matrix
representations will be given according to the so-called Kelvin convention\footnote{Also
referred to as the Mandel or Bechtcerew \citep{bechterew1926} convention, depending on the
author.}, that is, based on an orthonormal basis\footnote{In contrast to the classical Voigt
notation} adapted to $\mathbb{R}^6$ \citep{mehrabadi1990}. The normal form of generic TI tensor
is represented by:

\begin{equation}
\label{NF_TI}
\left[\qT{C}^{TI}\right]=
\begin{pmatrix}
C_{1111}&C_{1122} &C_{1133} & 0        & 0         & 0\\ 
   *    &C_{1111} &C_{1133} & 0        & 0         & 0\\ 
   *    &    *    &C_{3333} & 0        & 0         & 0 \\ 
   *    &    *    &    *    & 2C_{1313}& 0         &0 \\ 
   *    &    *    &    *    &   *      & 2C_{1313} &0  \\ 
   *    &    *    &    *    &   *      &   *       &C_{1111}-C_{1122}
\end{pmatrix}_{\mathcal{K}}.
\end{equation} 

\noindent As we can see, the tensor is characterized by five independent parameters. In each
exotic case, we will show how additional covariant conditions translate into constraints on
these parameters.

This section will also provide an opportunity to address an aspect that has not yet been
discussed: the stability under inversion. The question is whether an exotic stiffness tensor
necessarily leads to an exotic compliance tensor of the same type. As we shall see, exotic
structures are generally not preserved under inversion, and their stability depends on the type
of explicit harmonic decomposition considered.

\subsection{Uncoupled Transverse Isotropy (UTI)} 
\label{Sub_UTI}
The explicit CGHD interprets the elasticity tensor as a coupled law between the spherical and
deviatoric parts of the state tensors. For completeness and to maintain the logical flow of the
discussion, we restate \autoref{eq:ECGHD} below:
\ben
\begin{pmatrix}
    \dT{\sigma}^{d}\\ \dT{\sigma}^{s}
\end{pmatrix}=\begin{pmatrix}
\qT{H}+\dT{h}^{a}\boxtimes\dT{1}+\alpha \J & \frac{1}{3}\dT{h}^{b}\otimes\dT{1} \\ 
\frac{1}{3}\dT{1} \otimes \dT{h}^{b} & \beta\qT{K}
\end{pmatrix}\begin{pmatrix}
    \dT{\varepsilon}^{d}\\ \dT{\varepsilon}^{s}
\end{pmatrix}.
\een
With respect to this decomposition, the triplet $(\dT{h}^{a}, \dT{h}^{b}, \qT{H})$ is given the
following mechanical interpretation:
\begin{itemize}
\item $\dT{h}^{a}$ and $\qT{H}$ correspond to terms associated with deviatoric elasticity tensor $\qT{C}^{dd}$;
\item $\dT{h}^{b}$ corresponds to a term related to the coupling between spherical and deviatoric parts, that is, the tensors $\qT{C}^{ds}$ and $\qT{C}^{sd}$.
\end{itemize}

With respect to this decomposition, let consider to the exotic structure obtained by imposing
$\dT{h}^{b}=\dT{0}$, as represented in figure~\ref{fig:UTI}.

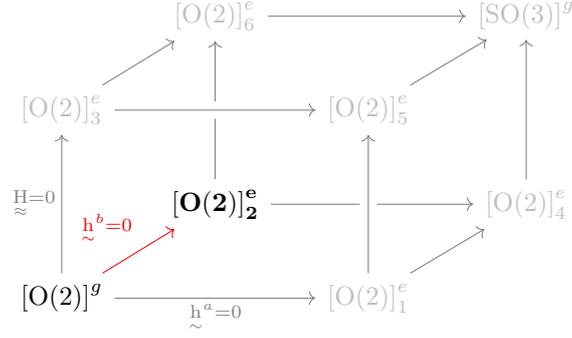
\begin{figure}[tbp]
\centering
\begin{tikzcd} [sep = .6 cm]
& \gray{\left[\OO(2)\right]^{e}_{6}}
\arrow [rr, gray, crossing over]
\arrow[from=dd,gray] & &
\gray{\left[\SO(3)\right]^{g}}
\\
\gray{\left[\OO(2)\right]^{e}_{3}}
\arrow [rr, gray, crossing over]\arrow[ru,gray] & & \gray{\left[\OO(2)\right]^{e}_{5}}
\arrow[ru,gray]
\\
& \black{\mathbf{\left[\mathbf{O}(2)\right]^{e}_{2}}}
\arrow [rr, gray] & &
\gray{\left[\OO(2)\right]^{e}_{4}}
\arrow[uu,gray]
\\
\black{\left[\OO(2)\right]^{g}}
\arrow [rr,"\dT{h}^{a}=0" below, gray]
\arrow [ru,"\dT{h}^{b}=0"  ,red]
\arrow[uu,"\qT{H}=0",gray] & &
\gray{\left[\OO(2)\right]^{e}_{1}}
\arrow [uu, crossing over, gray]
\arrow[ru,gray]
\\
\end{tikzcd}
\caption{Location of the UTI exotic structure (highlighted in bold) within the transition map}
\label{fig:UTI}
\end{figure}
\FloatBarrier
With respect to the CGHD, the vanishing of $\dT{h}^{b}$ is naturally interpreted as the
vanishing of the coupling between the spherical and deviatoric parts of the state tensors. The
elasticity is thus uncoupled:
\begin{equation}
\label{eq:ECGHD_UTI}
    \begin{pmatrix}
    \dT{\sigma}^{d}\\ \dT{\sigma}^{s}
\end{pmatrix}=\begin{pmatrix}
\qT{H}+\dT{h}^{a}\boxtimes\dT{1}+\alpha \J & 0 \\ 
0 & \beta\qT{K}
\end{pmatrix}\begin{pmatrix}
    \dT{\varepsilon}^{d}\\ \dT{\varepsilon}^{s}
\end{pmatrix}.
\end{equation}
\noindent It satisfies the first criterion in \textcolor{blue}{Definition~\ref{exotic_materials3D}}, as
the restriction arises not only from symmetry considerations but also from an additional
constraint, namely $\dT{h}^{b} = \dT{0}$. Moreover, such decoupling naturally arises in higher
symmetry classes, such as $[\mathcal{O}]$ and $[\SO(3)]$. Thus, this exotic elasticity can be
viewed as lying halfway between the classes $[\OO(2)]$ and $[\SO(3)]$, which reflects its
\emph{hypersymmetric} nature. We therefore classify it as an exotic elastic material, referred
to hereafter as Uncoupled Transverse Isotropy (UTI). Since the elasticity tensor of this exotic
material exhibits a block-diagonal matrix structure (see \autoref{eq:ECGHD_UTI}), the exotic
structure is stable under inversion. As a result, the associated compliance tensor also
corresponds to UTI.

The covariant condition $\dT{h}^{b} = \dT{0}$ can also be interpreted as a condition expressed
directly in terms of the components of the normal form of the tensor:
\begin{equation}
 \dT{h}^{b}=0   \quad \Leftrightarrow \quad C_{1111}+C_{1122}-C_{1133}-C_{3333}=0.
\end{equation}
By imposing this condition on the normal form of the generic transversely isotropic stiffness
tensor given in \autoref{NF_TI}, and substituting $C_{3333}$ with $C_{1111}+C_{1122}-C_{1133}$,
the normal form of UTI can be obtained:
\begin{equation}
\label{NF_UTI}
\left[\qT{C}^{UTI}\right]=
\begin{pmatrix}
C_{1111}&C_{1122} &C_{1133}                   & 0        & 0         & 0\\ 
   *    &C_{1111} &C_{1133}                   & 0        & 0         & 0\\ 
   *    &    *    &C_{1111}+C_{1122}-C_{1133} & 0        & 0         & 0 \\ 
   *    &    *    &    *                      & 2C_{1313}& 0         &0 \\ 
   *    &    *    &    *                      &   *      & 2C_{1313} &0  \\ 
   *    &    *    &    *                      &   *      &   *       &C_{1111}-C_{1122}
\end{pmatrix}_{\mathcal{K}}.
\end{equation}

\begin{exe}

To verify the decoupling between deviatoric and spherical elastic responses, consider the
identity tensor \( \dT{1} \), and calculate:
\[
\qT{C} : \dT{1} =  (C_{1111}+C_{1122} +C_{1133})  \, \dT{1}.
\]
Then, consider an harmonic tensor $\dT{h}$ and calculate
\begin{align*}
\left(\qT{C}:\dT{h}\right)_{11}&=(C_{1111}-C_{1133}) h_{11}+(C_{1122}-C_{1133})h_{22}, \\
\left(\qT{C}:\dT{h}\right)_{22}&=(C_{1122}-C_{1133}) h_{11}+(C_{1111}-C_{1133})h_{22} ,\\
\left(\qT{C}:\dT{h}\right)_{33}&=(2 C_{1133}-C_{1111}-C_{1122}) (h_{11}+h_{22}),
\end{align*}
from which it is found that $\tr(\qT{C}:\dT{h})=\dT{0}$.
Finally, an UTI stiffness tensor that fulfils the positive definiteness property is given
below:
\begin{equation}
\label{P3C7S3_E1}
\left[\qT{C}\right]=
\begin{pmatrix}
350     & 200     & 250     & 0   & 0  & 0\\ 
   *    & 350     & 250     & 0   & 0  & 0\\ 
   *    &    *    & 300     & 0   & 0  & 0 \\ 
   *    &    *    &    *    & 60  & 0  &0 \\ 
   *    &    *    &    *    &   * & 60 &0  \\ 
   *    &    *    &    *    &   * & *  &120
\end{pmatrix}_{\mathcal{K}},
\end{equation}
as its six eigenvalues $\{800, 150, 50, 60, 60, 120\}$ are strictly positive.
\end{exe}

\subsection{Transverse Isotropy with isotropic deviatoric elasticity (IDTI)} 
With respect to the CGHD decomposition, let \textcolor{blue}{us} consider now the exotic
structure obtained by imposing $\dT{h}^{a}=\dT{0}$ and $\qT{H}=\qT{0}$, as represented in
figure~\ref{fig:R0TI}.

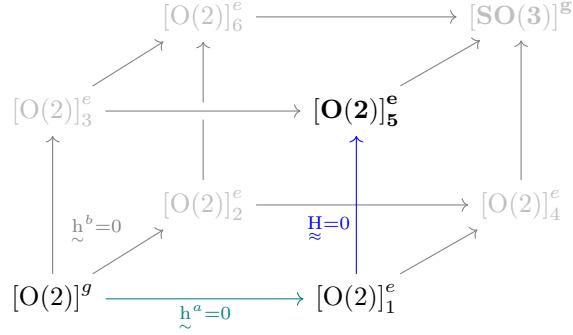
\begin{figure}[tbp]
\centering
\begin{tikzcd} [sep = .6 cm]
& \gray{\left[\OO(2)\right]^{e}_{6}} \arrow [rr, gray, crossing over] \arrow[from=dd,gray] & & \gray{\mathbf{\left[\mathbf{SO}(3)\right]^{g}}}   \\
 \gray{\left[\OO(2)\right]^{e}_{3}}\arrow [rr, gray, crossing over]\arrow[ru,gray] & & \black{\mathbf{\left[\mathbf{O}(2)\right]^{e}_{5}}} \arrow[ru,gray]\\
 & \gray{\left[\OO(2)\right]^{e}_{2}}   \arrow [rr, gray] & & \gray{\left[\OO(2)\right]^{e}_{4}} \arrow[uu,gray]\\
\black{\left[\OO(2)\right]^{g}}\arrow [rr,"\dT{h}^{a}=0" below, teal] \arrow [ru,"\dT{h}^{b}=0",gray]\arrow[uu,crossing over,gray] & & \black{\left[\OO(2)\right]^{e}_{1}} \arrow [uu, "\qT{H}=0"{pos=0.28, yshift=2pt}, blue]\arrow[ru,gray]\\
\end{tikzcd}
\caption{Location of the IDTI exotic structure (highlighted in bold) in the transition map}
\label{fig:R0TI}
\end{figure}

This point corresponds to the following Clebsch-Gordan block structure:
\begin{equation}
\label{eq:ROIT}
\begin{pmatrix}
    \dT{\sigma}^{d}\\ \dT{\sigma}^{s}
\end{pmatrix}=\begin{pmatrix}
\alpha \J &\frac{1}{3}\dT{h}^{b}\otimes\dT{1} \\ 
\frac{1}{3}\dT{1} \otimes \dT{h}^{b} & \beta\qT{K}
\end{pmatrix}\begin{pmatrix}
    \dT{\varepsilon}^{d}\\ \dT{\varepsilon}^{s}
\end{pmatrix},
\end{equation}
in which the vanishing of $\qT{H}$ and $\dT{h}^{a}$ leads to an isotropic deviatoric elasticity
tensor, $\qT{C}^{dd}$. Such exotic material will be termed Isotropic Deviatoric Transverse
Isotropy (IDTI). The resulting behaviour is clearly \emph{hypersymmetric} since isotropic
deviatoric elasticity only manifests generically for isotropic materials.

In \citet{Mou2023} within the $\RR^2$ setting,  we discussed an orthotropic material exhibiting isotropic deviatoric
elasticity, historically referred to as $R_0$-orthotropy \citep{Van02}. The exotic material
presented here can be regarded as a natural extension of this concept to $\mathbb{R}^3$.
Since the Clebsch–Gordan structure is not block-diagonal (cf. \autoref{eq:ROIT}), the exotic
property is not preserved under inversion, and the resulting compliance tensor is TI but not
IDTI.

The vanishing of $\qT{H}$ and $\dT{h}^{a}$ can also be interpreted using the components of the
normal form as follows:
\begin{equation*}
 \dT{h}^{a} =0  \quad \Leftrightarrow \quad {5 C_{1111} - 2 C_{3333} - 7 C_{1122} + 4 C_{1133}  - 6 C_{1313}}=0,
\end{equation*}
\begin{equation*}
 \qT{H}=0  \quad \Leftrightarrow \quad C_{1111}+C_{3333}-2 C_{1133} -4 C_{1313}=0.
\end{equation*}
Substituting these conditions into the normal form of the generic TI elasticity tensor given
in \autoref{NF_TI}, and replacing $C_{3333}$ and $C_{1313}$ with their expressions in terms of 
$C_{1111}$, $C_{1122}$ and $C_{1133}$, one obtains the normal form of IDTI:
\begin{equation}
\label{NF_R01}
\left[\qT{C}^{IDTI}\right]=
\begin{pmatrix}
C_{1111}&C_{1122}  &C_{1133}                     & 0                & 0                 & 0\\ 
   *    &C_{1111}  &C_{1133}                     & 0                & 0                 & 0\\ 
   *    &    *     &C_{1111}-2C_{1122}+2C_{1133} & 0                & 0                 & 0 \\ 
   *    &    *     &    *                        & C_{1111}-C_{1122}& 0                 &0 \\ 
   *    &    *     &    *                        &   *              & C_{1111}-C_{1122} &0  \\ 
   *    &    *     &    *                        &   *              &   *               &C_{1111}-C_{1122}
\end{pmatrix}_{\mathcal{K}}.
\end{equation} 

\begin{exe}
To verify the isotropic deviatoric elasticity, let's calculate the tensor $\qT{C}^{dd}$:
\begin{equation*}
\qT{C}^{dd} = (C_{1111}-C_{1122}) \; \J.
\end{equation*}
Finally, an IDTI elasticity tensor that fulfils the positive definiteness property is given
below:
\begin{equation}
[\qT{C}]=
\begin{pmatrix}
350     & 200     & 250     & 0   & 0  & 0\\ 
   *    & 350     & 250     & 0   & 0  & 0\\ 
   *    &    *    & 450     & 0   & 0  & 0 \\ 
   *    &    *    &    *    & 150  & 0  &0 \\ 
   *    &    *    &    *    &   * & 150 &0  \\ 
   *    &    *    &    *    &   * & *  &150
\end{pmatrix}_{\mathcal{K}},
\end{equation}
as its six eigenvalues $\{50(10+\sqrt{51}),50(10-\sqrt{51}),150,150,150,150\}$ are strictly
positive.
\end{exe}

\subsection{Transverse Isotropy with Isotropic Young's modulus (IYTI)}

In this final example, we focus on the directional Young's modulus, denoted by $E(\V{n})$, with
$\V{n} \in S^2$. As we shall see, the general expression of this function naturally involves
the Schur–Weyl decomposition of the compliance tensor $\qT{S}$, which is the inverse of the
stiffness tensor $\qT{C}$.

Formally, the Young's modulus in the direction $\V{n}$ is expressed as:
\ben
\frac{1}{E(\V{n})} = \frac{\varepsilon_{nn}}{\sigma_{nn}} = \qT{S} :: (\V{n} \otimes \V{n} \otimes \V{n} \otimes \V{n}).
\een
Since the tensor $\V{n} \otimes \V{n} \otimes \V{n} \otimes \V{n}$ is fully symmetric, the
expression of the Young's modulus depends only on the totally symmetric part of $\qT{S}$. 

The Schur–Weyl decomposition of the compliance tensor $\qT{S}$ yields:
\[
\qT{S} = \qT{S}^s + \qT{S}^a,\qquad
\text{with}\quad
\begin{cases}
\qT{S}^s = f(\beta^{-}, \dT{h}^{b-},\qT{H}^{-}) \\
\qT{S}^a = g(\alpha^{-}, \dT{h}^{a-})
\end{cases},
\]
In this expression above, the superscript $^{-}$ is used to indicate that the corresponding
term originates from the compliance tensor, rather than from the stiffness tensor. The detailed
expressions for the functions $f$ and $g$ can be found in \autoref{Sub_ExpDec}.

From this decomposition, we observe that the Young's modulus depends on $\qT{H}^{-}$ and
$\dT{h}^{b-}$, but not on $\dT{h}^{a-}$. One can therefore envision a situation in which both
$\qT{H}^-$ and $\dT{h}^b$ vanish, while $\dT{h}^{a-}$ is transverse isotropic. This would
result in a Transverse Isotropic elastic material with an Isotropic directional Young's modulus
(IYTI). Such a case is easily located within the lattice of exotic degeneracies of transverse
isotropy as presented in~\autoref{fig:IY-TI}.

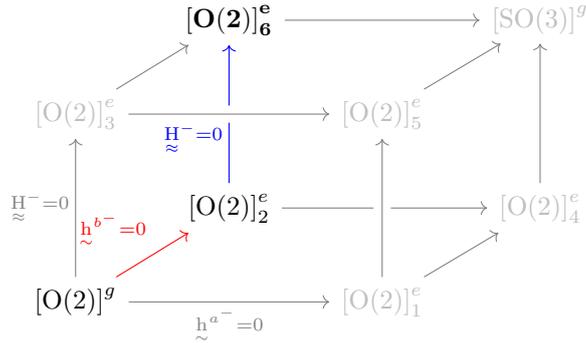
\begin{figure}[tbp]
\centering
\begin{tikzcd} [sep = .6 cm]
& \black{\mathbf{\left[\mathbf{O}(2)\right]^{e}_{6}}} \arrow [rr, gray, crossing over] \arrow[from=dd,"\qT{H}^{-}=0"{pos=0.28, yshift=2pt},blue] & & \gray{\left[\SO(3)\right]^{g}}   \\
 \gray{\left[\OO(2)\right]^{e}_{3}}\arrow [rr, gray, crossing over]\arrow[ru,gray] & & \gray{\left[\OO(2)\right]^{e}_{5}} \arrow[ru,gray]\\
 & \black{\left[\OO(2)\right]^{e}_{2}}   \arrow [rr, gray] & & \gray{\left[\OO(2)\right]^{e}_{4}} \arrow[uu,gray]\\
\black{\left[\OO(2)\right]^{g}}\arrow [rr,"\dT{h}^{a^{-}}=0" below, gray] \arrow [ru,"\dT{h}^{b^{-}}=0"  ,red]\arrow[uu,"\qT{H}^{-}=0",gray] & & \gray{\left[\OO(2)\right]^{e}_{1}} \arrow [uu, crossing over, gray]\arrow[ru,gray]\\
\end{tikzcd}
\caption{Location of the exotic structure (highlighted in bold) of IYTI within the transition map}
\label{fig:IY-TI}
\end{figure}
\FloatBarrier
The normal form of a TI compliance tensor holds the same forme as the stiffness tensor
presented in \autoref{NF_TI}. We recall here the elements of the decomposition computed as a
function of elasticity tensor given in Proposition \ref{prop2}, the vanishing of $\qT{H}^{-}$ and
$\dT{h}^{b-}$ can be interpreted in terms of the components of tensor $\qT{S}$ as follows:

\begin{equation*}
 \dT{h}^{b-} =\dT{0}  \quad \Leftrightarrow \quad  4 S_{1111} - 3 S_{3333} - S_{1133}  - 2 S_{1313}=0,
 \end{equation*}
 \begin{equation*}
 \qT{H}^{-}  =\qT{0}  \quad \Leftrightarrow \quad S_{1111} + S_{3333} - 2 S_{1133}  - 4 S_{1313}=0.
\end{equation*}
Replacing $S_{3333}$ and $S_{1313}$ with their expression in terms of $S_{1111}$ and
$S_{1133}$, a normal form of IYTI can be obtained:
\begin{equation}
\label{NF_IYTI1}
[\qT{S}]^{IYTI}=
\begin{pmatrix}
S_{1111}&S_{1122}  &S_{1133} & 0                & 0                 & 0\\ 
   *    &S_{1111}  &S_{1133} & 0                & 0                 & 0\\ 
   *    &    *     &S_{1111} & 0                & 0                 & 0 \\ 
   *    &    *     &    *    & S_{1111}-S_{1133}& 0                 &0 \\ 
   *    &    *     &    *    &   *              & S_{1111}-S_{1133} &0  \\ 
   *    &    *     &    *    &   *              &   *               &S_{1111}-S_{1122}
\end{pmatrix}_{\mathcal{K}}.
\end{equation} 
In this case, the Young’s modulus is isotropic and can be expressed as:
\ben
\dfrac{1}{E(\V{n})}=\beta^{-}=S_{1111}.
\een
The directional Young's modulus is generically isotropic only in the isotropic class, which
makes the present case clearly non-standard and exotic. This example is of particular interest
because:
\begin{enumerate}
    \item it has a fairly intuitive physical interpretation;
    \item it is traditionally accepted that an isotropic directional Young's modulus implies an isotropic elastic material.
\end{enumerate}
It should be noted, however, that this particular example of exotic elastic materials has
already been identified by several authors in the literature, notably by J.
\textcolor{blue}{Rychlewski} \citep{Ryc01} who was one of the pioneers in this field. The
contribution of Q.-C. He \citep{He2004} on this subject, around the same period, can also be
mentioned.

Regarding the question of the stability of the property under inversion, it is noted that the
Schur-Weyl decomposition does not preserve the linear operator structure of the elasticity
tensor. As a result, the property is not stable under inversion, and the symmetric part
$\qT{C}^s$ will not reduce to its isotropic component alone.

\begin{exe}
Finally, an IYTI elasticity tensor that fulfils the positive definiteness property is given
below:
\begin{equation}
[\qT{C}]=\frac{1}{9672}
\begin{pmatrix}
1183 & 377 & 468 & 0 & 0 & 0 \\
 377 & 1183 & 468 & 0 & 0 & 0 \\
 468 & 468 & 1248 & 0 & 0 & 0 \\
 0 & 0 & 0 & 744 & 0 & 0 \\
 0 & 0 & 0 & 0 & 744 & 0 \\
 0 & 0 & 0 & 0 & 0 & 806 \\
\end{pmatrix}_{\mathcal{K}}
\end{equation}
Its associated compliance is
\begin{equation}
[\qT{S}]=
\begin{pmatrix}
10      & -2       & -3     & 0   & 0  & 0\\ 
   *    & 10     &   -3   & 0   & 0  & 0\\ 
   *    &    *    & 10     & 0   & 0  & 0 \\ 
   *    &    *    &    *    & 13  & 0  &0 \\ 
   *    &    *    &    *    &   * & 13 &0  \\ 
   *    &    *    &    *    &   * & *  &12
\end{pmatrix}_{\mathcal{K}}
\end{equation}
which six eigenvalues $\{9+\sqrt{19},9-\sqrt{19},13,13,12,12\}$ are strictly positive.
\end{exe}

\section{Proof of the main results}
\label{S_Proof}

This final section is devoted to the proofs of the two main results of this paper:
\begin{enumerate}
    \item Proposition \ref{prop1}, concerning the generic geometric structure of the symmetry classes of $\Ela$; 
    \item Theorem \ref{thm.NumExo}, concerning the determination of all exotic geometric structures for symmetry classes above orthotropy.
\end{enumerate}

\subsection{Proof of Proposition \ref{prop1}}

Before presenting the proof of Proposition \ref{prop1}, we introduce the Curie principle, as it will be frequently referenced.
\begin{lem}[Curie principle]
Let a group $\mathrm{H}$ act on sets $X$ and $Y$. If a map $F : X \rightarrow Y$ is
$\mathrm{H}-$equivariant and if an element $x \in X$ is fixed under the action of $\mathrm{H}$,
then its image $y = F(x)$ is also fixed by $\mathrm{H}$.
\end{lem}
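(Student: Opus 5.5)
This follows directly from the definitions, so I will argue straight from them. Let $F:X\to Y$ be $\mathrm{H}$-equivariant, meaning $F(h\cdot x)=h\cdot F(x)$ for all $h\in\mathrm{H}$ and $x\in X$. Let $x\in X$ be fixed by $\mathrm{H}$, so that $h\cdot x=x$ for every $h\in\mathrm{H}$. The claim is that $y=F(x)$ is fixed as well.

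Fix an arbitrary $h\in\mathrm{H}$. Equivariance gives $h\cdot y=h\cdot F(x)=F(h\cdot x)$. Since $x$ is fixed, $F(h\cdot x)=F(x)=y$. Hence $h\cdot y=y$, and because $h$ was arbitrary, $y$ is fixed by all of $\mathrm{H}$. The argument uses only the equivariance identity and the fixed-point hypothesis; it needs no structure on $X$ or $Y$ beyond the two group actions. No step is a real obstacle.

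The same chain shows slightly more, namely $\G_x\subseteq\G_{F(x)}$. Equivalently, $\mathrm{H}$-invariance transfers from $x$ to $F(x)$ for every subgroup $\mathrm{H}$ of the acting group. This is the form used later: the explicit harmonic decompositions $f$ and their inverses, as well as the covariants (for instance $\dT{d_{2}}=\tr_{13}(\qT{H}:\qT{H})$), are $\SO(3)$-equivariant. Consequently, symmetrizing a generic triclinic tensor over a representative $\G$ forces each covariant and each pair to be $\G$-invariant. This yields the lower bounds on the symmetry classes listed in Proposition \ref{prop1}.
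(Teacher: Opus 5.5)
Your proof is correct and matches the paper's approach. The paper gives no argument beyond restating the lemma as the inclusion $\mathrm{G}_x \subset \mathrm{G}_{F(x)}$, and your computation $h\cdot F(x)=F(h\cdot x)=F(x)$ is exactly the verification of that inclusion; you also use it as the paper does, to say that the covariants of an elasticity tensor cannot be less symmetric than the tensor itself.
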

Let us denote by $\G_{x}$ the symmetry group of an element $x\in X$, it is the group defined as
\ben
\G_{x}:=\{g\in\mathrm{H},\ g\cdot x=x\}
\een
The Curie Principle means that
\ben
\G_{x}\subset\G_{F(x)}
\een
i.e the symmetry group of $x$ is a subgroup of the symmetry group of $F(x)$ which is
\textcolor{blue}{sometimes formulated} as \emph{the symmetry of the causes are included in the
symmetries of the consequences.}   In the present context, the Curie Principle implies that the
covariants of $\qT{C}$ can not have lower symmetry than $\qT{C}$ itself.

\textcolor{red}{Let us recall the following results concerning the symmetry classes of $\HH^2$
and $\HH^4$}
\begin{equation}\label{eq:ClasH}
\textcolor{red}{
\begin{split}
\SymC(\HH^{2})&=\{[\DD_2],[\OO(2)],[\SO(3)]\} \\
\SymC(\HH^{4})&=\{[1],[\ZZ_2],[\DD_2],[\DD_3],[\DD_4],[\OO(2)],[\mathcal{O}],[\SO(3)]\}
\end{split}
}
\end{equation}

The generic structures of the 8 symmetry classes, which are the subject of Proposition \ref{prop1},
 are determined through a two-step process for each class $[\G]$ of the resulting elasticity tensor:
\begin{enumerate}
    \item determination of the minimal symmetry class for each covariant compatible with $[\G]$. This step establishes the sets $\{[\G_{\dT{h}^{a}}],[\G_{\dT{h}^{b}}],[\G_{\qT{H}}]\}$; 
    \item determination of the least symmetric relative orientation between each pair of covariants compatible with $[\G]$. This step fixes the sets $\{[\G_{(\dT{h}^{a},\dT{h}^{b})}],[\G_{(\dT{h}^{a},\qT{H})}],[\G_{(\dT{h}^{b},\qT{H})}]\}$.
\end{enumerate}

The determination of the symmetry classes of pairs of covariants relies on the clips operations
introduced earlier in the paper; their definition and main properties are given in
\autoref{App_cliOpe}. The results of clips product between classes are detailed in Theorem
\ref{clipspro}. We now consider the different symmetry classes of $\Ela$, from the least to the
most symmetric:

\begin{itemize}
    \item \textbf{Triclinic $[1]$}\\
In the generic geometric  structure,  the symmetry class of each covariant should be minimal,
from  \autoref{eq:ClasH} it can be concluded that
$\{[\G_{\dT{h}^{a}}],[\G_{\dT{h}^{b}}],[\G_{\qT{H}}]\}=\{[\DD_2] ,[\DD_2], [1]\}$. In order for
the triplet to be fully generic, the covariants must occupy mutually generic positions, meaning
that:
$\{[\G_{(\dT{h}^{a},\dT{h}^{b})}],[\G_{(\dT{h}^{a},\qT{H})}],[\G_{(\dT{h}^{b},\qT{H})}]\}=\{[1],
[1],[1]\}$.
\item \textbf{Monoclinic $[\ZZ_2]$}\\
In generic structure the symmetry classes should be minimal. From the Curie Principle the
covariant of $\qT{C}$ can not be less symmetry than $[\ZZ_2]$,  as a result
$\{[\G_{\dT{h}^{a}}],[\G_{\dT{h}^{b}}],[\G_{\qT{H}}]\}=\{[\DD_2] ,[\DD_2], [\ZZ_2]\}$.
Regarding the relative orientation of the covariants, it must correspond to the minimal
configuration compatible with the class $[\ZZ_{2}]$. Based on the symmetry class analysis, we
have:
$\{[\G_{(\dT{h}^{a},\dT{h}^{b})}],[\G_{(\dT{h}^{a},\qT{H})}],[\G_{(\dT{h}^{b},\qT{H})}]\}=\{[\ZZ_2],
[\ZZ_2],[\ZZ_2]\}$.
\item \textbf{Orthotropic $[\DD_2]$}\\
The minimal symmetries of individual covariants compatible with the symmetry class $[\DD_2]$
are $\{[\G_{\dT{h}^{a}}],[\G_{\dT{h}^{b}}],[\G_{\qT{H}}]\}=\{[\DD_2] ,[\DD_2], [\DD_2]\}$.
Regarding the pair the minimal symmetries are as follows
$\{[\G_{(\dT{h}^{a},\dT{h}^{b})}],[\G_{(\dT{h}^{a},\qT{H})}],[\G_{(\dT{h}^{b},\qT{H})}]\}=\{[\DD_2],
[\DD_2],[\DD_2]\}$.
\item \textbf{Trigonal $[\DD_3]$}\\
For this case, the symmetry classes of $\dT{h}^{a}$ and $\dT{h}^{b}$ should not be less than
$[\DD_3]$, thus the least symmetric class compatible is $[\OO(2)]$. As a consequence
$\{[\G_{\dT{h}^{a}}],[\G_{\dT{h}^{b}}],[\G_{\qT{H}}]\}=\{[\OO(2)],[\OO(2)], [\DD_3]\}$. For the
classes of the pair, the minimal combination obtained from the clips operation is
$\{[\G_{(\dT{h}^{a},\dT{h}^{b})}],[\G_{(\dT{h}^{a},\qT{H})}],[\G_{(\dT{h}^{b},\qT{H})}]\}=\textcolor{red}{\{[\OO(2)],[\DD_3],[\DD_3]\}}$.
\item \textbf{Tetragonal $[\DD_4]$}\\
For this case, the symmetry classes of $\dT{h}^{a}$ and $\dT{h}^{b}$ remain as $[\OO(2)]$ and
that for $\qT{H}$ should be $[\DD_4]$. The corresponding minimal symmetry classes for the
coupled pairs are
$\{[\G_{(\dT{h}^{a},\dT{h}^{b})}],[\G_{(\dT{h}^{a},\qT{H})}],[\G_{(\dT{h}^{b},\qT{H})}]\}=\textcolor{red}{\{[\OO(2)],[\DD_4],[\DD_4]\}}$.
\item \textbf{Transverse Isotropy  $[\OO(2)]$}\\
The minimal symmetries of individual components compatible with the symmetry class $[\OO(2)]$
are $\{[\G_{\dT{h}^{a}}],[\G_{\dT{h}^{b}}],[\G_{\qT{H}}]\}=\{[\OO(2)] ,[\OO(2)], [\OO(2)]\}$.
The clips operations between covariants result in the following minimal symmetry of coupled
pairs
$\{[\G_{(\dT{h}^{a},\dT{h}^{b})}],[\G_{(\dT{h}^{a},\qT{H})}],[\G_{(\dT{h}^{b},\qT{H})}]\}=\{[\OO(2)],
[\OO(2)],[\OO(2)]\}$.
\item \textbf{Cubic $[\mathcal{O}]$}\\
The minimal symmetry of $\dT{h}^{a}$ and $\dT{h}^{b}$ that is not lower than $[\mathcal{O}]$ is
$[\SO(3)]$, and that for $\qT{H}$ is $[\mathcal{O}]$, thus we have
$\{[\G_{\dT{h}^{a}}],[\G_{\dT{h}^{b}}],[\G_{\qT{H}}]\}=\{[\SO(3)] ,[\SO(3)], [\mathcal{O}]\}$.
The clips operations between covariants result in the following minimal symmetry of coupled
pairs
$\{[\G_{(\dT{h}^{a},\dT{h}^{b})}],[\G_{(\dT{h}^{a},\qT{H})}],[\G_{(\dT{h}^{b},\qT{H})}]\}=\{[\SO(3)],
[\mathcal{O}],[\mathcal{O}]\}$.

    \item \textbf{Generic isotropic $[\SO(3)]$}\\
    This situation is direct, thanks to the Curie Principle, the covariant of $\qT{C}$ can not
    have less symmetry than $\qT{C}$ itself. Thus:
    $$\{[\G_{\dT{h}^{a}}],[\G_{\dT{h}^{b}}],[\G_{\qT{H}}],[\G_{(\dT{h}^{a},\dT{h}^{b})}],[\G_{(\dT{h}^{a},\qT{H})}],[\G_{(\dT{h}^{b},\qT{H})}]\}=\{[\SO(3)],[\SO(3)],
    [\SO(3)],[\SO(3)],[\SO(3)],[\SO(3)]\}$$
\end{itemize}

This completes the proof of Proposition \ref{prop1}.

\subsection{Proof of Theorem \ref{thm.NumExo}}
\label{Sub_Result}


In this paper, the classification procedure follows a two-step approach. First, a maximal set
of possible geometric structures is enumerated by applying the clips product, a purely
combinatorial operation. This preliminary step assumes that all components of the geometric
structure can be freely specified, disregarding any constraints from pairwise compatibility
that may exist between them. Then, the admissibility of the resulting candidates is analysed,
ultimately yielding the complete list of geometric structures. In short, we will successively
consider two lists of geometric structures.
\begin{itemize}
    \item \textbf{Maximal exotic structures:} full set of exotic candidates obtained purely through clips products without considering pairwise compatibility conditions;
     \item \textbf{Admissible exotic structures:} a refined subset of the maximal exotic set, incorporating pairwise compatibility conditions.
\end{itemize}

We begin by presenting the restricted clips operation, focusing on the clips operations between
symmetry classes that are strictly higher than the one under consideration.

\begin{defn}[Restricted clips operation]
For two families (finite or infinite) of symmetry classes $\mathcal{F}_{1}$ and
$\mathcal{F}_{2}$, we define the special clips operator, denoted by $\circledcirc^{[\G]}$:
\ben
\mathcal{F}_{1}\circledcirc^{[\G]} \mathcal{F}_{2}:=\mathop{\bigcup}\limits_{[\Hh_{1}]>[\G],[\Hh_{2}]>[\G] }  [\Hh_{1}]\circledcirc^{[\G]}  [\Hh_{2}] \quad \textrm{with} \ [\Hh_{1}]\in\mathcal{F}_{1},[\Hh_{2}]\in\mathcal{F}_{2},
\een
\end{defn}

Accordingly, the maximal list is derived through the application of restricted clips operations
to symmetry classes strictly greater than  $[\DD_2]$.
\red{In practice, this amounts to filtering the Table of Theorem \ref{clipspro} to retain only the classes higher than $[\DD_{2}]$.}
It results in the list of the 23 geometric structures provided in \autoref{SymCla_Complet}. Among these candidates, five generic
structures correspond to the symmetry classes: $\{[\DD_3],[\DD_4], [\OO(2)], [\oct], [\SO(3)]
\}$. The remaining 18 geometric structures thus constitute a maximal list of candidate exotic
structures that decompose as follows: 6 exotic candidates for $[\DD_3]$, 6 exotic candidates
for $[\DD_4]$, 6 exotic candidates for $[\OO(2)]$. The  admissibility of those elements must
now be examined. To this end, we begin by noting that the 18 distinct structures fall into two
distinct categories:
\begin{enumerate}
    \item \textbf{Configurations containing at least one isotropic individual component:}\\
    In these cases, the three-mode clips product simplifies to a two-mode clips product. As a
    result, the compatibility issue discussed earlier no longer arises. The outcome of the
    clips operation directly yields the admissible exotic structures. We observe that 16 out of
    the 18 configurations satisfy this condition and can therefore be considered admissible.
    \item \textbf{Configurations in which none of the individual components is isotropic:}\\
    Only two configurations fall into this category:
    
\begin{tabular}{|c||c|c|c|c|c|c|}
 		\hline		
$[\G]$&$[\G_{\dT{h}^{a}}]$ &$[\G_{\dT{h}^{b}}]$ &$[\G_{\qT{H}}]$&$[\G_{(\dT{h}^{a},\dT{h}^{b})}]$&$[\G_{(\dT{h}^{a},\qT{H})}]$&$[\G_{(\dT{h}^{b},\qT{H})}]$ \\
 		\hline
            \hline 
\centering $[\DD_4]$&$[\OO(2)]$&$[\OO(2)]$&$[\mathcal{O}]$&$[\OO(2)]$&$[\DD_4]$&$[\DD_4]$ \\
 				\hline	
 \centering $[\DD_3]$&$[\OO(2)]$ &$[\OO(2)]$&$[\mathcal{O}]$&$[\OO(2)]$&$[\DD_3]$&$[\DD_3]$ \\
 				\hline
 	\end{tabular}
    
For these, admissibility cannot be directly concluded from the clips product alone and must be
proven explicitly.
\end{enumerate}
The following theorem is addressed to proof admissibility of these two configurations:
\begin{lem}[$\mathbf{[\G_{\dT{h}^{a}}]=[\G_{\dT{h}^{b}}]=[\G_{(\dT{h}^{a},\dT{h}^{b})}]=[\OO(2)]\Rightarrow [\G_{(\dT{h}^{b},\qT{H})}]=[\G_{(\dT{h}^{a},\qT{H})}]}$] 
\label{theorem5}
    If $\dT{h^{a}}$, $\dT{h^{b}}$ and $(\dT{h^{a}}, \dT{h^{b}})$ are all TI  then the symmetry
    class of the pair $(\dT{h^{b}}, \qT{H})$ is identical to that of $(\dT{h^{a}}, \qT{H})$.
\end{lem}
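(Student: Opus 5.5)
The plan is to show that, under the hypotheses, $\dT{h}^{a}$ and $\dT{h}^{b}$ have exactly the same symmetry group, not merely conjugate ones. The claim then follows by computing the pair symmetry groups as intersections:
\ben
\G_{(\dT{h}^{a},\qT{H})}=\G_{\dT{h}^{a}}\cap\G_{\qT{H}},\qquad \G_{(\dT{h}^{b},\qT{H})}=\G_{\dT{h}^{b}}\cap\G_{\qT{H}}.
\een
If $\G_{\dT{h}^{a}}=\G_{\dT{h}^{b}}$, these two groups coincide as subgroups of $\SO(3)$, so in particular their conjugacy classes agree.

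\textbf{First step.} I would describe the symmetry group of a transversely isotropic element of $\HH^2$ explicitly. A traceless symmetric tensor $\dT{h}$ with $[\G_{\dT{h}}]=[\OO(2)]$ has a simple eigenvalue and a double one. Since $\tr\dT{h}=0$ and $\dT{h}\neq 0$, it can be written
\ben
\dT{h}=\lambda\left(\V{n}\otimes\V{n}-\tfrac{1}{3}\dT{1}\right),\qquad \lambda\neq 0,\ |\V{n}|=1.
\een
Its symmetry group is therefore the specific copy $\OO(2)_{\V{n}}$, generated by the rotations about $\V{n}$ and the half-turns about axes orthogonal to $\V{n}$. Applied to both tensors, this gives $\G_{\dT{h}^{a}}=\OO(2)_{\V{n}_a}$ and $\G_{\dT{h}^{b}}=\OO(2)_{\V{n}_b}$.

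\textbf{Second step.} This is the only genuinely geometric point, and I expect it to be the main obstacle, though a mild one. The pair symmetry group is $\OO(2)_{\V{n}_a}\cap\OO(2)_{\V{n}_b}$, and the hypothesis says it is conjugate to $\OO(2)$. I would show this forces $\V{n}_a=\pm\V{n}_b$. The argument is direct:
\begin{itemize}
\item The intersection is infinite. It therefore contains a rotation of infinite order, that is, an angle not a rational multiple of $\pi$.
\item A rotation in $\OO(2)_{\V{n}_a}$ of angle different from $\pi$ must have axis $\pm\V{n}_a$.
\item The same rotation lies in $\OO(2)_{\V{n}_b}$, so its axis is also $\pm\V{n}_b$. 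Hence $\V{n}_a=\pm\V{n}_b$.
\end{itemize}
Alternatively, one can read this off the clips table of Theorem \ref{clipspro}. The entry $[\OO(2)]\circledcirc[\OO(2)]=\{[\ZZ_2],[\DD_2],[\OO(2)]\}$ attains $[\OO(2)]$ only for coinciding axes, the other two classes corresponding to orthogonal and generic axes respectively. Since $\OO(2)_{\V{n}}=\OO(2)_{-\V{n}}$, either route gives $\G_{\dT{h}^{a}}=\G_{\dT{h}^{b}}=\G_{(\dT{h}^{a},\dT{h}^{b})}$.

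\textbf{Conclusion.} Substituting into the intersection formulas gives $\G_{(\dT{h}^{a},\qT{H})}=\G_{(\dT{h}^{b},\qT{H})}$, which proves the lemma. I would then remark on what this implies for the two remaining configurations of the maximal list, with $[\G_{\qT{H}}]=[\mathcal{O}]$. There the admissible pair classes for $(\dT{h}^{a},\qT{H})$ and $(\dT{h}^{b},\qT{H})$ are forced to be equal. Both being $[\DD_4]$, or both $[\DD_3]$, is consistent with the lemma: it is realised by taking a common axis $\V{n}$ that is a fourfold, respectively threefold, axis of the cube fixed by $\qT{H}$. The triplet group is then $\OO(2)_{\V{n}}\cap\G_{\qT{H}}=\DD_4$, respectively $\DD_3$. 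Both configurations are therefore admissible, while mixed choices, with different classes for the two pairs, are excluded.
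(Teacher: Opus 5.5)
Your proof is correct and follows the paper's route. Both arguments write the two TI deviators as multiples of $(\V{n}\otimes\V{n})'$, note that the pair can be of class $[\OO(2)]$ only when the axes coincide up to sign, and conclude that $\dT{h}^{a}$ and $\dT{h}^{b}$ (proportional in the paper, with identical symmetry groups in your version) give the same pair group with $\qT{H}$; your infinite-order-rotation argument additionally justifies the axis-alignment step, which the paper only asserts.
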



\begin{proof}
If $\dT{h}^{a}$ and $\dT{h}^{b}$ are strictly transverse isotropic, there exists unit vectors
$\V{n}$ and $\V{m}$ such that
\ben
\dT{h}^{a}=\alpha (\V{n}\otimes\V{n})',\quad 
\dT{h}^{b}=\beta (\V{m}\otimes\V{m})',\ \alpha\beta\neq 0
\een
in which $\dT{T}'$ indicate the deviatoric part of $\dT{T}$. In such a case $\dT{h}^{a}$ is
transverse isotropic with respect to $\V{n}$, while $\dT{h}^{b}$ is transverse isotropic with
respect to $\V{m}$. For arbitrary relative orientation of $\V{n}$ and $\V{m}$ the symmetry
class of the pair $(\dT{h}^{a},\dT{h}^{b})$ is $[\ZZ_2]$. The only possibility for being
$[\OO(2)]$ is $\V{m}=\pm\V{n}$, as consequence
\ben
\dT{h}^{b}=\gamma\dT{h}^{a},\ \quad\gamma=\textcolor{red}{\frac{\beta}{\alpha}}\neq0
\een
The tensor $\dT{h}^{a}$ and $\dT{h}^{b}$ being proportional, the symmetry classes of the pairs
$(\dT{h^{b}}, \qT{H})$ and $(\dT{h^{a}}, \qT{H})$ are identical.
\end{proof}

\red{It should be observed that there exists relative orientation of $\dT{h}$ and $\qT{H}$ such as
\[
 [G_{(\dT{h}^a,\qT{H})}] \;=\; \bigl[\,\mathcal{O} \cap \OO(2)_{\V n}\,\bigr]
 \]
Taking $\qT{H} \in \HH^4$ to be the cubic harmonic tensor such as $[G_{\qT{H}}] = [\mathcal{O}]$ and intersecting this
cubic symmetry with the $\OO(2)$-symmetry of a fixed axis $\V{n}$ gives the sought configurations
\[
  [G_{(\dT{h}^a,\qT{H})}] \;=\; \bigl[\,\mathcal{O} \cap \OO(2)_{\V n}\,\bigr]
  \;=\;
  \begin{cases}
    [\DD_4], & \V{n}\ \text{a $4$-fold (face) axis of the cube},\\[2pt]
    [\DD_3], & \V{n}\ \text{a $3$-fold (diagonal) axis of the cube},
  \end{cases}
\]
which thus proves admissibility of the two configurations.}


Thus it can be concluded that all these 18 exotic candidates are admissible. Their degeneracy
stems exclusively from the elementary covariants.

\section{Conclusions and perspectives}

In this work, extending the two–dimensional study initiated in \citep{Mou2023}, we investigated
exotic elastic materials in three dimensions. The main results are as follows:

\begin{itemize}
    \item \textbf{Determination of exotic structures for 3D linear elasticity with symmetry classes higher than orthotropic}. The geometrical tools have been applied to 3D linear elasticity, and based on our proposed definition of \emph{exotic materials}, it can be concluded that there are 18 exotic structures in 3D linear elasticity with symmetry classes higher than orthotropic. 
    
    \item \textbf{Presentation of some exotic materials.} 
    The aforementioned approach is applied to three instances of exotic elasticity: Uncoupled
    Transverse Isotropy (UTI), Transverse Isotropy with Isotropic Deviatoric elasticity
    (IDTI)\textcolor{red}{ and Transverse Isotropy} with Isotropic Young's modulus (IYTI). This
    highlights the robustness and applicability of our approach, and it provides theoretical
    support for further investigations into a wider array of exotic materials.
\end{itemize}

Compared with the 2D setting, the problem becomes significantly more intricate: harmonic
parametrizations are no longer unique and the number of possible degeneracies increases
dramatically. Rather than being a drawback, this complexity highlights the wide range of
possibilities enabled by the design of tailored elasticity tensors. When combined with topology
optimization and additive manufacturing, this approach opens broad practical opportunities and
is of significant interest for the design of architected materials.

Several natural extensions of this work can be identified.
\begin{itemize}
\item  From a theoretical perspective, exotic combinations associated with low–symmetry elasticity tensors remain to be explored, in particular for orthotropic, monoclinic and triclinic classes. The combinatorial structure of these cases is considerably more involved than for the situations considered in the present work, and the number of admissible configurations is expected to reach several hundreds, if not thousands.
\item  Topology optimization. Beyond existence results, identifying mesostructure geometries capable of realizing exotic elastic responses remains a key challenge. While this task has been addressed in two dimensions, it still needs to be carried out in three dimensions. The covariant formalism implicitly used throughout this work provides a promising framework for constructing cost functions suited to this class of optimization problems.
\end{itemize}

\begin{appendix}

\section{Clips operation}
\label{App_cliOpe}
The whole vector space $\Vv$ can be decomposed into a direct sum of irreducible $\Vv_{i}$ $(i=1,\cdots,N)$.
To obtain the symmetry classes of $\Vv$, we must therefore compute the symmetry classes of a direct sum $\Vv_{1}\oplus \cdots \oplus \Vv_{N}$,
knowing independently the symmetry classes of each space $\Vv_{i}$. 
To do so, we introduce in this \red{appendix some basic definitions associated with \emph{clips operation}}~\citep{Olive2019}. \\



\begin{defn}[clips operation]
For each symmetry class $[\Hh_{1}]$ and $[\Hh_{2}]$, we define the clip operator of $[\Hh_{1}]$ and $[\Hh_{2}]$, denoted by $[\Hh_{1}]\circledcirc[\Hh_{2}]$:
\ben
[\Hh_{1}]\circledcirc[\Hh_{2}]:=\left \{ \left [ \Hh_{1}\cap \vg\Hh_{2}\vg^{-1} \right ], \forall \vg\in \G \right \}
\een
\end{defn}

This definition immediately extends to two families (finite or infinite) $\mathcal{F}_{1}$ and $\mathcal{F}_{2}$ of symmetry classes:
\ben
\mathcal{F}_{1}\circledcirc \mathcal{F}_{2}:=\mathop{\bigcup}\limits_{[\Hh_{1}]\in\mathcal{F}_{1},[\Hh_{2}]\in\mathcal{F}_{2}}  [\Hh_{1}]\circledcirc [\Hh_{2}]
\een
This clips operation defines thus an operation on the set of symmetry classes $\SymC$ which is associative and commutative. We have moreover

\begin{center}
$[1]\circledcirc[\Hh]={[1]}$ and $[\G]\circledcirc[\Hh]={[\Hh]}$
\end{center}
for every symmetry class $[\Hh]$, where $1:=\{\mathbf{e}\}$ and $\mathbf{e}$ is the identity element of $\G$.\\

The symmetry classes of a direct sum of subspaces are obtained by the clips of their respective symmetry classes.

\begin{lem}
\label{P1C2S6_l4}
~\citep{Olive2019}
Let $(\rho,\Vv_{1})$ and $(\rho,\Vv_{2})$ be two linear representations of $\G$. Then
\ben
\SymC(\Vv_{1}\oplus \Vv_{2})=\SymC(\Vv_{1})\circledcirc \SymC(\Vv_{2})
\een
\end{lem}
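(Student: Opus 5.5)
The plan is to reduce everything to one elementary fact about diagonal actions. If $\G$ acts on $\Vv_{1}\oplus\Vv_{2}$ by $\vg\cdot(v_{1},v_{2})=(\rho_{1}(\vg)v_{1},\rho_{2}(\vg)v_{2})$, then the stabiliser of a pair is the intersection of the stabilisers:
\ben
\G_{(v_{1},v_{2})}=\G_{v_{1}}\cap\G_{v_{2}}.
\een
This holds because $\vg\cdot(v_{1},v_{2})=(v_{1},v_{2})$ if and only if $\vg$ fixes each component. I will also use the conjugation rule $\G_{\vg\cdot v}=\vg\,\G_{v}\,\vg^{-1}$, which follows directly from the definition of the symmetry group (\autoref{P3C6S1_e1}). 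With these two facts, the equality $\SymC(\Vv_{1}\oplus\Vv_{2})=\SymC(\Vv_{1})\circledcirc\SymC(\Vv_{2})$ is proved by double inclusion.

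\emph{Inclusion $\subseteq$.} Let $[\Hh]\in\SymC(\Vv_{1}\oplus\Vv_{2})$. Then $[\Hh]=[\G_{(v_{1},v_{2})}]$ for some pair $(v_{1},v_{2})$. Write $\G_{v_{1}}=\mathbf{a}\Hh_{1}\mathbf{a}^{-1}$ and $\G_{v_{2}}=\mathbf{b}\Hh_{2}\mathbf{b}^{-1}$, where $[\Hh_{1}]\in\SymC(\Vv_{1})$ and $[\Hh_{2}]\in\SymC(\Vv_{2})$ are fixed representatives. Conjugating the intersection by $\mathbf{a}^{-1}$ gives
\ben
\mathbf{a}^{-1}\left(\G_{v_{1}}\cap\G_{v_{2}}\right)\mathbf{a}=\Hh_{1}\cap\vg\Hh_{2}\vg^{-1},\qquad \vg:=\mathbf{a}^{-1}\mathbf{b}.
\een
Hence $[\Hh]=[\Hh_{1}\cap\vg\Hh_{2}\vg^{-1}]\in[\Hh_{1}]\circledcirc[\Hh_{2}]$, which lies in the union defining the clips of the two families.

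\emph{Inclusion $\supseteq$.} Take $[\Hh_{1}]\in\SymC(\Vv_{1})$, $[\Hh_{2}]\in\SymC(\Vv_{2})$ and $\vg\in\G$. I must realise $[\Hh_{1}\cap\vg\Hh_{2}\vg^{-1}]$ as the class of some pair.
\begin{enumerate}
\item Since $[\Hh_{1}]$ is a symmetry class of $\Vv_{1}$, some $w_{1}$ has $\G_{w_{1}}$ conjugate to $\Hh_{1}$, say $\G_{w_{1}}=\mathbf{c}\Hh_{1}\mathbf{c}^{-1}$.
\item Set $v_{1}:=\mathbf{c}^{-1}\cdot w_{1}$. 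By the conjugation rule, $\G_{v_{1}}=\Hh_{1}$ exactly.
\item In the same way, choose $v_{2}\in\Vv_{2}$ with $\G_{v_{2}}=\Hh_{2}$.
\item Put $v_{2}':=\vg\cdot v_{2}$, so that $\G_{v_{2}'}=\vg\Hh_{2}\vg^{-1}$.
\end{enumerate}
The stabiliser identity then gives $\G_{(v_{1},v_{2}')}=\Hh_{1}\cap\vg\Hh_{2}\vg^{-1}$. Its class therefore belongs to $\SymC(\Vv_{1}\oplus\Vv_{2})$.

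The main point needing care is that symmetry classes are defined only up to conjugacy. The proof must move from "some conjugate of $\Hh_{i}$ is a stabiliser" to "$\Hh_{i}$ itself is a stabiliser", and this is exactly what the conjugation rule $\G_{\vg\cdot v}=\vg\G_{v}\vg^{-1}$ provides. The second point is that the relative rotation $\vg$ between the two representatives is arbitrary. This is ensured by acting with $\vg$ on only one component, which is legitimate because $v_{1}$ and $v_{2}$ are chosen independently in two different spaces.

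Degenerate cases, such as $v_{i}=0$ with stabiliser $\G$, need no special treatment: they fit the same argument, and they correspond to the rule $[\G]\circledcirc[\Hh]=[\Hh]$.
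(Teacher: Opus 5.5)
Your double-inclusion argument is correct. The stabiliser identity $\G_{(v_{1},v_{2})}=\G_{v_{1}}\cap\G_{v_{2}}$ and the conjugation rule $\G_{\vg\cdot v}=\vg\,\G_{v}\,\vg^{-1}$ are exactly what is needed, and choosing $v_{1},v_{2}$ with stabilisers equal to the chosen representatives before rotating only the second component handles the conjugacy bookkeeping properly. The paper gives no proof of its own (it only cites the reference), and your argument is the standard one for this lemma.
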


The clips operations between SO(3)-closed subgroups is pivotal, and to make verification straightforward for readers, we have placed the full treatment in the \autoref{clipspro} rather than in this appendix.

\end{appendix}
\newpage
\bibliographystyle{elsarticle-harv}
\bibliography{References}
\end{document}